\documentclass{llncs}

\usepackage{amsmath}
\usepackage[utf8]{inputenc}
\usepackage[T1]{fontenc}
\usepackage{amsfonts}
\usepackage{amssymb}
\usepackage{float}
\usepackage{stmaryrd}
\usepackage{url}
\usepackage{hyperref}
\usepackage{enumitem}
\usepackage[margin=2cm]{geometry}
\usepackage{tikz}
  \usetikzlibrary{automata}
  \usetikzlibrary{arrows}
  \usetikzlibrary{shapes}
  \usetikzlibrary{decorations.pathmorphing}
  \usetikzlibrary{fit}
  \usetikzlibrary{trees}
  \usetikzlibrary{calc}

\tikzstyle{every picture}=[
  >=stealth', shorten >=1pt, node distance=1.44cm,auto,bend angle=45,initial text=,
  every state/.style={inner sep=0.75mm, minimum size=1mm},font=\scriptsize,
  triangle/.style = { regular polygon, regular polygon sides=3, inner sep=0}
]

\setlength{\abovedisplayskip}{0pt}
\setlength{\belowdisplayskip}{0pt}
\setlength{\abovedisplayshortskip}{0pt}
\setlength{\belowdisplayshortskip}{0pt}
\allowdisplaybreaks[1]

\bibliographystyle{splncs_srt}

\begin{document}

\title{The Bottom-Up Position Tree Automaton, the Father Automaton and their Compact Versions}

\author{Samira Attou\inst{1} \and Ludovic Mignot\inst{2} \and Djelloul Ziadi\inst{2}}

\institute{
   USTHB, Faculty of Mathematics, RECITS Laboratory,\\
   BP 32, El Alia, 16111 Bab Ezzouar, Algiers, Algeria
  \and
    Groupe de Recherche Rouennais en Informatique Fondamentale,\\
    Université de Rouen Normandie,\\
    Avenue de l'Université,
    76801 Saint-Étienne-du-Rouvray, France
  \email{sattou@usthb.dz,\{ludovic.mignot,djelloul.ziadi\}@univ-rouen.fr}
}

\maketitle

\begin{abstract}
  The conversion of a given regular tree expression into a tree automaton has been widely studied.
  However, classical interpretations are based upon a Top-Down interpretation of tree automata.
  In this paper, we propose new constructions based on the Gluskov's one and on the one of Ilie and Yu one using a Bottom-Up interpretation.
  One of the main goals of this technique is to consider as a next step the links with deterministic recognizers, consideration that cannot be performed with classical Top-Down approaches.
  Furthermore, we exhibit a method to factorize transitions of tree automata and show that this technique is particularly interesting for these constructions, by considering natural factorizations due to the structure of regular expression.
\end{abstract}

\section{Introduction}\label{se:int}
Automata are recognizers used in various domains of applications  especially in computer science, \emph{e.g.} to represent (non necessarily finite) languages, or to solve the membership test, \emph{i.e.} to verify whether a given element belongs to a language or not.
Regular expressions are compact representations for these recognizers.
Indeed, in the case where elements are words, it is well known that each regular expression can be transformed into a finite state machine recognizing the language it defines.
Several methods have been proposed to realize this conversion.
As an example, Glushkov~\cite{Glu61} (and independently Mc-Naughton and Yamada~\cite{MY60}) showed how to construct a non deterministic finite automaton  with $n+1$ states where $n$ represents the number of  letters of a given regular expression.
%\$mathcal(O)n^2$
The main idea of the construction  is to define  some particular sets named $\mathrm{First}$, $\mathrm{Follow}$ and $\mathrm{Last}$ that are computed with respect to the occurrences of the symbols that appear in the expression.

These so-called Glushkov automata (or Position automata) are finite state machines that have been deeply studied.
They have been structurally characterized by Caron and Ziadi~\cite{CZ00}, allowing us to invert the Glushkov computation by constructing an expression with $n$ symbols from a Glushkov automaton with $n+1$ states.
They have been considered too in the theoretical notion of one-unabiguity by Bruggemann-Klein and Wood~\cite{BKW98}, characterizing regular languages recognized by a deterministic Glushkov automaton, or with practical thoughts, like expression updating~\cite{BDHLM04}.
Finally, it is also related to combinatorial research topics.
As an example, Nicaud~\cite{Nic09} proved that the average number of transitions of Glushkov automata is linear.

Moreover, the Glushkov automata can be easily reduced into the Follow automata~\cite{IY03} in the case of word by applying an easy-to-compute congruence from the position functions.

The Glushkov construction was extended to tree automata~\cite{LOZ13,MOZ17}, using a Top-Down interpretation of tree expressions.
This interpretation can be problematic while considering determinism.
Indeed, it is a folklore that there exist regular tree languages that cannot be recognized by Top-Down deterministic tree automata.
Extensions of one-ambiguity are therefore incompatible with this approach.

In this paper, we propose a new approach based on the construction of
Glushkov and of Ilie and Yu in a Bottom-Up interpretation.
We also define a compressed version of tree automata in order to factorize the transitions, and we show how to apply it directly over these computations using natural factorizations due to the structure of the expressions.
The paper is structured as follows: in Section~\ref{sec:prelim}, we recall  some properties related to regular tree expressions; we also introduce  some basics definitions.
We define, in Section~\ref{sec:glushFun}, the position functions used  for the construction of the Bottom-Up position tree automaton.
Section~\ref{sec:BotUpAut} indicates the way  that we construct the  Bottom-Up position tree automaton with a linear number of states  using the functions shown in Section~\ref{sec:glushFun}.
In Section~\ref{sec:CompBotUpAut}, we propose the notion of compressed  automaton and show how to reduce the size of the Position automaton computed in the previous section.
Finally, we show in Section~\ref{sec:father} and in Section~\ref{sec:father-compressed} how to extend the notion of Follow automaton in a Bottom-Up interpretation.

This paper is an extended version of~\cite{AMZ18}.
It is a full-proof version that contains two new constructions: the father automaton and its compressed version.

\section{Premiminaries}\label{sec:prelim}
Let us first introduce some notations and preliminary definitions.
For a boolean condition $\psi$, we denote by $(E \mid \psi)$ $E$ if $\psi$ is satisfied, $\emptyset$ otherwise.
Let $\Sigma=(\Sigma_n)_{n\geq 0}$ be a finite ranked alphabet.
A \emph{tree} $t$ over $\Sigma$ is inductively defined by $t=f(t_1,\ldots,t_k)$ where $f\in\Sigma_k$ and $t_1,\ldots,t_k$  are $k$ trees over $\Sigma$.
The relation "$s$ is a subtree of $t$" is denoted by $s\prec t$ for any two trees $s$ and $t$.
We denote by $\mathrm{root}(t)$ the root symbol of the tree $t$, \emph{i.e.}
  \begin{equation}\label{eq def root tree}
    \mathrm{root}(f(t_1,\ldots,t_k))=f.
  \end{equation}
The \emph{predecessors} of a symbol $f$ in a tree $t$ are the symbols that appear directly above it.
We denote by $\mathrm{father}(t,f)$, for a tree $t$ and a symbol $f$ the pairs
    \begin{equation}\label{def fath}
      \mathrm{father}(t,f) = \{(g,i)\in\Sigma_l\times\mathbb{N} \mid \exists g(s_1,\ldots,s_l)\prec t, \mathrm{root}(s_i)=f\}.
    \end{equation}
These couples link the predecessors of $f$ and the indices of the subtrees in $t$ that $f$ is the root of.
Let us consider a tree $t=g(t_1,\ldots,t_k)$ and a symbol $f$.
By definition of the structure of a tree, a predecessor of $f$ in $t$ is a predecessor of $f$ in a subtree $t_i$ of $t$, or $g$ if $f$ is a root of a subtree $t_i$ of $t$.
Consequently:
  \begin{equation}\label{eq father tree}
    \mathrm{father}(t,f) = \bigcup_{i\leq n} \mathrm{father}(t_i,f) \cup\{(g,i)\mid f\in\mathrm{root}(t_i)\}.
  \end{equation}
We denote by $T_\Sigma$ the set of trees over $\Sigma$.
A tree language $L$ is a subset of $T_\Sigma$.

For any $0$-ary symbol $c$, let $t \cdot_c L$ denote the tree language constituted of the trees obtained by substitution of any symbol $c$ of $t$ by a tree of $L$.
By a linear extension, we denote by $L \cdot_c L' = \{t \cdot_c L' \mid t\in L \}$.
For an integer $n$, the $n$-th substitution $^{c,n}$ of a language $L$ is the language $L^{c,n}$ recursively defined by
\begin{align*}
  L^{c,n} &=
    \begin{cases}
      \{c\}, & \text{ if } n = 0,\\
      L \cdot_c L^{c, n - 1} & \text{ otherwise.}
    \end{cases}
\end{align*}
Finally, we denote by $L(E_1^{*_c})$ the language $ \bigcup_{k\geq 0} L(E_1)^{c,k}$.

An \emph{automaton} over $\Sigma$ is a $4$-tuple $\mathrm{A}=(Q,\Sigma,Q_F,\delta)$ where $Q$ is a set of states, $Q_F\subseteq Q$ is the set of final states, and $\delta\subset\bigcup_{k\geq 0} (Q^k\times \Sigma_k\times Q)$ is the set of transitions, which can be seen as the function from $Q^k \times \Sigma_k$ to $2^Q$ defined by
\begin{equation*}
  (q_1,\ldots,q_k,f,q) \in \delta \Leftrightarrow q \in \delta(q_1,\ldots,q_k,f).
\end{equation*}
It can be linearly extended as the function from $(2^{Q})^k \times \Sigma_k$ to $2^Q$ defined by
  \begin{equation}\label{eq:extDeltaEns}
    \delta(Q_1,\ldots,Q_n,f) = \displaystyle \bigcup_{(q_1,\ldots,q_n)\in Q_1\times\cdots Q_n} \delta(q_1,\ldots,q_n,f).
  \end{equation}
Finally, we also consider the function $\Delta$ from $T_{\Sigma}$ to $2^Q$ defined by
  \begin{equation*}
    \Delta(f(t_1,\ldots,t_n)) = \delta(\Delta(t_1),\ldots,\Delta(t_n),f).
  \end{equation*}
Using these definitions, the language $L(A)$ recognized by the automaton $A$ is the language
\begin{equation*}
  \{t\in T_{\Sigma} \mid \Delta(t)\cap Q_F \neq\emptyset\}.
\end{equation*}
A tree automaton $A=(\Sigma,Q,F,\delta)$ is \emph{deterministic} if for any symbol $f$ in $\Sigma_m$, for any $m$ states $q_1,\ldots,q_m$ in $Q$, $|\delta(q_1,\ldots,q_m,f)|\leq 1$.

\emph{A regular expression} $E$ over the alphabet $\Sigma$ is inductively defined by:
  \begin{align*}
    E&=f(E_1,\ldots,E_k),  & E &= E_1+E_2,\\
    E&=E_1\cdot_c E_2, &  E &=E_1^{*_c},
  \end{align*}
where $k\in\mathbb{N}$, $c\in\Sigma_0$, $f\in\Sigma_k$ and $E_1,\ldots,E_k$ are any $k$ regular expressions over $\Sigma$.
In what follows, we consider expressions where the subexpression $E_1\cdot_c E_2$ only appears when $c$ appears in the expression $E_1$.
The \emph{language denoted} by $E$ is the language $L(E)$ inductively defined by
  \begin{align*}
    L(f(E_1,\ldots,E_k)) &= \{f(t_1,\ldots,t_k)\mid t_j \in L(E_j), j\leq k\},\\
    L(E_1+E_2) &= L(E_1) \cup L(E_2),\\
    L(E_1\cdot_c E_2) &= L(E_1) \cdot_c L(E_2),\\
    L(E_1^{*_c}) &= L(E_1) ^{*_c},
  \end{align*}
with $k\in\mathrm{N}$, $c\in\Sigma_0$, $f\in\Sigma_k$ and $E_1,\ldots,E_k$ any $k$ regular expressions over $\Sigma$.

A regular expression $E$ is \emph{linear} if each symbol $\Sigma_{n}$ with $n\neq 0$ occurs at most once in $E$.
Note that the symbols of rank $0$ may appear more than once.
We denote by $\overline{E}$ the linearized form of $E$, which is the expression $E$ where any occurrence of a symbol is indexed by its position in the expression.
The set of indexed symbols, called \emph{positions}, is denoted by  $\mathrm{Pos}(\overline{E})$.
We also consider the \emph{delinearization} mapping $\mathrm{h}$ sending a linearized expression over its original unindexed version.

  Let $\phi$ be a function between two alphabets $\Sigma$ and $\Sigma'$ such that $\phi$ sends $\Sigma_n$ to $\Sigma'_n$ for any integer $n$.
  By a well-known adjunction, this function is extended to an \emph{alphabetical morphism} from $T(\Sigma)$ to $T(\Sigma')$ by setting $\phi(f(t_1,\ldots,t_n)) = \phi(f)(\phi(t_1),\ldots,\phi(t_n))$.
  As an example, one can consider the delinearization morphism $\mathrm{h}$ that sends an indexed alphabet to its unindexed version.
  Given a language $L$, we denote by $\phi(L)$ the set $\{\phi(t)\mid t\in L\}$.
  The \emph{image by} $\phi$ of an automaton $A=(\Sigma,Q,Q_F,\delta)$ is the automaton $\phi(A)=(\Sigma',Q,Q_F,\delta')$ where
  \begin{equation*}
    \delta' = \{(q_1,\ldots,q_n,\phi(f),q) \mid (q_1,\ldots,q_n,f,q) \in\delta\}.
  \end{equation*}
  By a trivial induction over the structure of the trees, it can be shown that
  \begin{equation}\label{eq:lienMorpAutLang}
    \phi(L(A)) = L(\phi(A)).
  \end{equation}
  An alphabetical morphism is a particular case of \emph{automaton morphism} between two automata $A=(\Sigma,Q,F,\delta)$ and $B=(\Sigma',Q',F',\delta')$, that is a function $\phi$ that sends $\Sigma_n$ to $\Sigma'_n$ for any integer $n$, $Q$ to $Q'$, $F$ to $F'$ and $\delta$ to $\delta'$ such that
  \begin{equation}\label{eq def morphism}
    \delta'((\phi(q_1),\ldots,\phi(q_n)),\phi(f)) = \{\phi(q) \mid q\in \delta((q_1,\ldots,q_n),f)\}.
  \end{equation}
  In this case, we set $\phi(A)=(\phi(\Sigma),\phi(Q),\phi(F),\phi(\delta))$.
  Let us first show that morphisms are stable w.r.t. transition composition.
  \begin{lemma}
    Let $A=(\Sigma,\_,\_,\delta)$ be an automaton and $\phi$ be an automaton morphism from $A$.
    Let $\phi(A)=(\Sigma',\_,\_,\delta')$.
    For any tree $t$ in $T(\Sigma)$,
    \begin{equation*}
      \Delta'(\phi(t)) = \{\phi(q) \mid q\in \Delta(t)\}.
    \end{equation*}
  \end{lemma}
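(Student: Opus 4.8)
The plan is to proceed by structural induction on the tree $t$, which is the natural strategy since both $\Delta$ and the alphabetical morphism $\phi$ are defined inductively on the tree structure, and the desired identity relates the two.

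First I would set up the induction. Write $t = f(t_1,\ldots,t_n)$ with $f\in\Sigma_n$ and $t_1,\ldots,t_n\in T(\Sigma)$. The base case is subsumed here by taking $n=0$ (a leaf $f\in\Sigma_0$), so no separate treatment is strictly needed, though one may spell it out: $\Delta'(\phi(f)) = \delta'(\phi(f)) = \{\phi(q)\mid q\in\delta(f)\} = \{\phi(q)\mid q\in\Delta(f)\}$ by the morphism condition~\eqref{eq def morphism} with no state arguments. For the inductive step, I would assume $\Delta'(\phi(t_i)) = \{\phi(q)\mid q\in\Delta(t_i)\}$ for each $i\leq n$ and compute:
\begin{align*}
  \Delta'(\phi(t))
    &= \Delta'(\phi(f)(\phi(t_1),\ldots,\phi(t_n)))\\
    &= \delta'(\Delta'(\phi(t_1)),\ldots,\Delta'(\phi(t_n)),\phi(f))\\
    &= \delta'(\{\phi(q)\mid q\in\Delta(t_1)\},\ldots,\{\phi(q)\mid q\in\Delta(t_n)\},\phi(f)).
\end{align*}
Here the first equality is the definition of the alphabetical morphism $\phi$ on trees, the second is the definition of $\Delta'$, and the third uses the induction hypothesis.

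The crux is then to show that the last expression equals $\{\phi(q)\mid q\in\delta(\Delta(t_1),\ldots,\Delta(t_n),f)\}$, which by definition of $\Delta$ is $\{\phi(q)\mid q\in\Delta(t)\}$. This is where the morphism condition~\eqref{eq def morphism} combines with the set-extension~\eqref{eq:extDeltaEns} of $\delta$. Concretely, $\delta'$ applied to the image sets is the union over all tuples $(\phi(q_1),\ldots,\phi(q_n))$ with $q_i\in\Delta(t_i)$ of $\delta'(\phi(q_1),\ldots,\phi(q_n),\phi(f))$, and~\eqref{eq def morphism} rewrites each such term as $\{\phi(q)\mid q\in\delta(q_1,\ldots,q_n,f)\}$; taking the union over the same index set and using~\eqref{eq:extDeltaEns} yields $\{\phi(q)\mid q\in\delta(\Delta(t_1),\ldots,\Delta(t_n),f)\}$, as desired.

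I do not expect a genuine obstacle here; the proof is a routine induction. The one point requiring a little care is the handling of the set-valued extension: one must check that taking images commutes with the union defining $\delta$ on sets of states — i.e. that every tuple of states in the image $\prod_i\phi(\Delta(t_i))$ is of the form $(\phi(q_1),\ldots,\phi(q_n))$ for some $q_i\in\Delta(t_i)$, so that no spurious tuples arise and none are missed. This is immediate from the definition of the image of a set under $\phi$, but it is the step where a sloppy argument could go wrong, so I would state it explicitly rather than fold it silently into the chain of equalities.
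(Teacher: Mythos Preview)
Your proposal is correct and follows essentially the same route as the paper's proof: structural induction on $t$, unfolding $\Delta'$ and the induction hypothesis, then re-indexing the union over $\prod_i\phi(\Delta(t_i))$ as a union over $\prod_i\Delta(t_i)$ via the morphism condition~\eqref{eq def morphism} and the set-extension~\eqref{eq:extDeltaEns}. The paper performs exactly this chain of equalities (introducing the abbreviation $S_i=\{\phi(q)\mid q\in\Delta(t_i)\}$), and the re-indexing step you single out as needing care is precisely the step the paper carries out without comment.
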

  \begin{proof}
    Let us proceed by induction over the structure of the trees in $T(\Sigma)$.
    Let $t=f(t_1,\ldots,t_n)$ be a tree in $T(\Sigma)$.
    Let us set $S_i= \{\phi(q) \mid q\in \Delta(t_i)\}$ for any integer $i\leq n$.
    Notice that $S_i = (\Delta'(\phi(t_i))$ for any integer $i\leq n$ from the induction hypothesis.
    For a given state $q'$ of $\phi(A)$, we denote by $\phi^{-1}(q)$ the set of the states $q'$ in $A$ satisfying $\phi(q) = q'$.
    Then:
    from ,
    \begin{align*}
      \Delta'(\phi(f)(\phi(t_1),\ldots,\phi(t_n)))
        &= \delta'((\Delta'(\phi(t_1)),\ldots,\Delta'(\phi(t_n))),\phi(f)) \\
        &= \delta'((S_1,\ldots,S_n),\phi(f)) & \textbf{(Induction Hypothesis)}\\
        &= \bigcup_{(q_1,\ldots,q_n) \in S_1 \times \cdots \times S_n} \delta'((q_1,\ldots,q_n),\phi(f))\\
        &= \bigcup_{(p_1,\ldots,p_n) \in \Delta(t_1)\times\cdots\times\Delta(t_n)} \delta'((\phi(p_1),\ldots,\phi(p_n)),\phi(f))\\
        &= \bigcup_{(p_1,\ldots,p_n) \in \Delta(t_1)\times\cdots\times\Delta(t_n)} \{\phi(q) \mid q\in \delta((p_1,\ldots,p_n),f)\} & \textbf{(Equation~\eqref{eq def morphism})}\\
        &= \{\phi(q) \mid q\in \delta((\Delta(t_1),\ldots,\Delta(t_n)),f)\}\\
        &= \{\phi(q) \mid q\in \Delta(f(t_1,\ldots,t_n))\}.
    \end{align*}
    \qed
  \end{proof}
  As a direct consequence of the previous lemma,
  \begin{proposition}\label{prop morph pres lang}
    Let $A=(\Sigma,\_,\_,\_)$ be an automaton and $\phi$ be an automaton morphism from $A$.
    If for any symbol $f$ in $\Sigma$, $\phi(f)=f$, then
    \begin{equation*}
      L(A) = L(\phi(A)).
    \end{equation*}
  \end{proposition}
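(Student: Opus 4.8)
The plan is to derive this immediately from the preceding lemma. The lemma tells us that for any tree $t$ in $T(\Sigma)$, the set of states reached in $\phi(A)$ on the tree $\phi(t)$ is exactly the $\phi$-image of the set of states reached in $A$ on $t$, that is $\Delta'(\phi(t)) = \{\phi(q)\mid q\in\Delta(t)\}$. The hypothesis that $\phi(f)=f$ for every symbol $f$ means that $\phi$ acts as the identity on the alphabet, hence on trees as well: $\phi(t)=t$ for every $t\in T(\Sigma)$, by a trivial induction over the structure of $t$ using the definition of the alphabetical morphism extension. So the lemma specializes to $\Delta'(t) = \{\phi(q)\mid q\in\Delta(t)\}$ for every tree $t$.

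Next I would unwind the definitions of the recognized languages. By definition, $t\in L(A)$ iff $\Delta(t)\cap F\neq\emptyset$, and $t\in L(\phi(A))$ iff $\Delta'(t)\cap \phi(F)\neq\emptyset$. Using the specialized lemma, $\Delta'(t)\cap\phi(F) = \{\phi(q)\mid q\in\Delta(t)\}\cap\phi(F)$. So the only thing left to check is that this set is nonempty if and only if $\Delta(t)\cap F$ is nonempty. The forward direction is immediate: if $q\in\Delta(t)\cap F$ then $\phi(q)\in\{\phi(q')\mid q'\in\Delta(t)\}$ and $\phi(q)\in\phi(F)$. The converse direction is where one must be slightly careful, since $\phi$ need not be injective.

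The converse goes as follows: suppose $\phi(q)\in\{\phi(q')\mid q'\in\Delta(t)\}\cap\phi(F)$ for some state, so there is $p\in\Delta(t)$ with $\phi(p)=\phi(q)$ and there is $r\in F$ with $\phi(r)=\phi(q)$. A priori $p$ need not be final and $r$ need not be in $\Delta(t)$, so non-injectivity seems to be an obstacle. However, I would invoke the morphism condition on final states: an automaton morphism sends $F$ onto $F'$ and, more to the point, the intended reading of ``$\phi$ sends $F$ to $F'$'' together with the transition-compatibility should give that $\phi^{-1}(F') = F$ — or at least that membership in $F'$ pulls back to membership in $F$ along $\phi$ restricted to reachable states. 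Concretely, from $\phi(p)=\phi(r)$ and $r\in F$ we get $\phi(p)\in\phi(F)=F'$, and then the fact that $\phi$ is a morphism of automata (so $F'$ is exactly the image of $F$ and the preimage structure is respected) yields $p\in F$; hence $p\in\Delta(t)\cap F\neq\emptyset$.

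The main obstacle, then, is precisely the handling of non-injectivity in the last step: one needs to be sure that ``$\phi$ sends $F$ to $F'$'' in the definition of automaton morphism is strong enough to conclude $\phi(p)\in F' \Rightarrow p\in F$ for the relevant state $p$. If the definition as given only guarantees $\phi(F)\subseteq F'$ rather than equality-with-preimage, I would either strengthen the remark to read $\phi(F)=F'$ with $F=\phi^{-1}(F')$ (which is the standard convention for automaton morphisms and is consistent with how $\phi(A)$ is defined with $F'=\phi(F)$), or restrict attention to the restriction of $\phi$ to the set of coaccessible states, on which the correspondence of final states is forced by the recognized-language requirement. Once that point is pinned down, the two language inclusions follow symmetrically and we conclude $L(A)=L(\phi(A))$.
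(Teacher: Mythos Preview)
Your approach is exactly the paper's: the paper states this proposition as ``a direct consequence of the previous lemma'' and gives no further argument. You have correctly reconstructed what that direct consequence must be, namely that $\phi(t)=t$ reduces the lemma to $\Delta'(t)=\{\phi(q)\mid q\in\Delta(t)\}$, and then one compares finality.

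The subtlety you flag about non-injectivity is genuine and is not addressed by the paper. With the paper's definition of automaton morphism (``$\phi$ sends $F$ to $F'$'') one only gets $\phi(F)\subseteq F'$ (or at best $\phi(F)=F'$), which yields $L(A)\subseteq L(\phi(A))$ but not the reverse inclusion in general: from $\phi(p)\in\phi(F)$ one cannot deduce $p\in F$ without knowing $F=\phi^{-1}(\phi(F))$. The paper's ``direct consequence'' glosses over this. However, in the only place the paper actually uses this proposition (Proposition~\ref{prop bot up cong pres lang}, quotienting by a Bottom-Up congruence), the canonical morphism $\phi$ sends $q$ to $[q]_\sim$ and the congruence is explicitly required to satisfy $p\sim p'\Rightarrow(p\in F\Leftrightarrow p'\in F)$, which is precisely $F=\phi^{-1}(\phi(F))$. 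So your instinct to pin down the final-state condition is correct; the clean fix is to add that hypothesis to the definition of automaton morphism (or to the statement of the proposition), and then your argument goes through without difficulty.
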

  Two automata $A$ and $B$ are \emph{isomorphic} if there exist two morphisms $\phi$ and $\phi'$ satisfying
  \begin{align*}
    A &= \phi'(\phi(A)), &
    B &= \phi(\phi'(B)).
  \end{align*}
\section{Position Functions}\label{sec:glushFun}
  In this section, we define the position functions that are considered in the construction of the Bottom-Up automaton in the next sections.
  We show how to compute them and how they characterize the trees in the language denoted by a given expression.

  Let $E$ be a linear expression over a ranked alphabet $\Sigma$ and $f$ be a symbol $\in \Sigma_k$.
  The set $\mathrm{Root}(E)$, subset of $\Sigma$, contains the roots of the trees in $L(E)$, \emph{i.e.}
  \begin{equation}\label{eq def root}
    \mathrm{Root}(E) = \{\mathrm{root}(t) \mid t\in L(E)\}.
  \end{equation}
  The set $\mathrm{Father}(E,f)$, subset of $\Sigma\times\mathbb{N}$, contains a couple $(g,i)$ if there exists a tree in $L(E)$ with a node labeled by $g$ the $i$-th child of is a node labeled by $f$:
  %
  %LM : j'ai remplacé par multline pour avoir le label sur la même ligne
 \begin{equation}\label{eq def father}
    \mathrm{Father}(E,f)=\bigcup_{t\in L(E)} \mathrm{father}(t,f).
  \end{equation}
  \begin{example}\label{ex:calcul fonctions glushkov}
    Let us consider the ranked alphabet defined by $\Sigma_2=\{f\}$, $\Sigma_1=\{g\}$, and $\Sigma_0=\{a,b\}$.
    Let $E$ and $\overline{E}$ be the expressions defined by
    \begin{equation*}
      E = (f(a,a)+g(b))^{*_a}\cdot_b f(g(a),b), \quad
      \overline{E} = (f_1(a,a)+g_2(b))^{*_a}\cdot_b f_3(g_4(a),b).
    \end{equation*}
     Hence,
     \begin{gather*}
       \mathrm{Root}(\overline{E}) = \{a,f_1,g_2\},\\
       \begin{aligned}
         \mathrm{Father}(\overline{E},f_1) &= \{(f_1,1),(f_1,2)\}, &
         \mathrm{Father}(\overline{E},a) &= \{(f_1,1),(f_1,2),(g_4,1)\},\\
         \mathrm{Father}(\overline{E},g_2) &= \{(f_1,1),(f_1,2)\}, &
         \mathrm{Father}(\overline{E},b) &= \{(f_3,2)\},\\
         \mathrm{Father}(\overline{E},f_3) &= \{(g_2,1)\}, &
         \mathrm{Father}(\overline{E},g_4) &= \{(f_3,1)\}.
       \end{aligned}
     \end{gather*}
 \end{example}
  Let us show how to inductively compute these functions.
  \begin{lemma}\label{lem root ind}
    Let $E$ be a linear expression over a ranked alphabet $\Sigma$.
    The set $\mathrm{Root}(E)$ is inductively computed as follows:
    \begin{align*}
      \mathrm{Root}(f(E_1,...,E_n)) &= \{f\},\\
      \mathrm{Root}(E_1+E_2) &= \mathrm{Root}(E_1)\cup \mathrm{Root}(E_2),\\
      \mathrm{Root}(E_1\cdot_c E_2) &=
        \begin{cases}
          \mathrm{Root}(E_1)\setminus\{c\}\cup \mathrm{Root}(E_2)& \text{ if } c\in L(E_1),\\
          \mathrm{Root}(E_1) & \text{otherwise,}
        \end{cases}\\
      \mathrm{Root}(E_1^{*_c}) &= \mathrm{Root}(E_1)\cup\{c\},
    \end{align*}
    where $E_1,\ldots,E_n$ are $n$ regular expressions over $\Sigma$, $f$ is a symbol in $\Sigma_n$ and $c$ is a symbol in $\Sigma_0$.
  \end{lemma}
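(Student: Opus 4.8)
The plan is to prove each of the four equalities by reasoning directly from the definition $\mathrm{Root}(E) = \{\mathrm{root}(t) \mid t\in L(E)\}$, combined with the inductive description of $L(E)$, and — for the substitution and iteration cases — with the structural identity~\eqref{eq def root tree} governing how $\mathrm{root}$ behaves under substitution of a $0$-ary symbol. The key preliminary observation, which I would record first, is that for a tree $t$ and a $0$-ary symbol $c$, the roots of the trees in $t\cdot_c L$ are: $\mathrm{root}(t)$ itself if $t\neq c$ (substitution happens strictly below the root, so the root is untouched), and $\mathrm{Root}(L)$ if $t = c$ (the whole tree is replaced). This dichotomy is what drives the two remaining cases.

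First, the base case $\mathrm{Root}(f(E_1,\ldots,E_n)) = \{f\}$: every tree in $L(f(E_1,\ldots,E_n))$ has the form $f(t_1,\ldots,t_n)$, so its root is $f$ by~\eqref{eq def root tree}; conversely such a tree exists (each $L(E_j)$ is assumed nonempty implicitly, or we argue it as needed), giving the reverse inclusion. Second, the sum case is immediate from $L(E_1+E_2) = L(E_1)\cup L(E_2)$ and the fact that $\mathrm{root}$ applied to a union of languages distributes over the union. Third, the iteration case $\mathrm{Root}(E_1^{*_c}) = \mathrm{Root}(E_1)\cup\{c\}$: since $L(E_1^{*_c}) = \bigcup_{k\geq 0} L(E_1)^{c,k}$, the case $k=0$ contributes the single tree $c$, whence $c$; for $k\geq 1$ we have $L(E_1)^{c,k} = L(E_1)\cdot_c L(E_1)^{c,k-1}$, and by the preliminary observation the root of any such tree is either the root of a tree of $L(E_1)$ (when that tree is not $c$) or a root obtained recursively from $L(E_1)^{c,k-1}$ — an easy sub-induction on $k$ then shows the set of all roots arising for $k\geq 1$ is exactly $\mathrm{Root}(E_1)$ (note $c\in\mathrm{Root}(E_1^{*_c})$ already, and if $c\in L(E_1)$ it also lies in $\mathrm{Root}(E_1)$, so no discrepancy arises).

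Finally, the substitution case $\mathrm{Root}(E_1\cdot_c E_2)$: write $L(E_1\cdot_c E_2) = \{t\cdot_c L(E_2)\mid t\in L(E_1)\}$ and split according to whether $t = c$ or $t\neq c$. If $E_1$ contains a tree $t\neq c$ its root is $\mathrm{root}(t)\in\mathrm{Root}(E_1)\setminus\{c\}$, and conversely every element of $\mathrm{Root}(E_1)\setminus\{c\}$ is realized this way (a tree of $L(E_1)$ with a given root $\neq c$ is not the bare tree $c$). The tree $t=c$ occurs in $L(E_1)$ precisely when $c\in L(E_1)$, and in that case $c\cdot_c L(E_2) = L(E_2)$, contributing exactly $\mathrm{Root}(E_2)$; otherwise nothing is contributed. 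Combining the two subcases yields the stated case distinction. The only mildly delicate point — and the step I expect to require the most care — is the iteration case, where one must make precise the sub-induction on $k$ showing that the roots contributed for all $k\geq 1$ collapse to exactly $\mathrm{Root}(E_1)$, handling cleanly the interaction between "$t=c$ inside $L(E_1)$" and the already-present element $c$; everything else is a routine unwinding of the definitions.
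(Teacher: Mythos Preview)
Your proposal is correct and follows essentially the same case analysis as the paper's proof, using the same key dichotomy (whether the outermost tree in a substitution is $c$ or not). The one place you work harder than necessary is the iteration case: the paper avoids your sub-induction on $k$ by observing directly that any $t\in L(E_1^{*_c})$ is either $c$ or lies in $t_1\cdot_c L(E_1^{*_c})$ for some $t_1\in L(E_1)\setminus\{c\}$, which immediately gives $\mathrm{root}(t)\in\{c\}\cup\mathrm{Root}(E_1)$ without unwinding the layers.
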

  \begin{proof}
    Let us consider the following cases.

    \begin{enumerate}
      \item The case when $E=f(E_1,\ldots,E_n)$ is a direct consequence of Equation~\eqref{eq def root tree} and Equation~\eqref{eq def root}.
      \item Let us consider a tree $t$ in $L(E_1+E_2)$.
        Consequently, $t$ is in $L(E_1)$ or in $L(E_2)$.
        And we can conclude using Equation~\eqref{eq def root}.
      \item Let us consider a tree $t$ in $t_1 \cdot_c L(E_2)$ with $t_1\in L(E_1)$.
        If $t_1 = c$ (resp. $t_1 \neq c$), then it holds by definition of $\cdot_c$ that $\mathrm{root}(t)\in\mathrm{Root}(L(E_2))$ (resp. $\mathrm{root}(t)=\mathrm{root}(t_1)\setminus \{c\}$).
        Once again, we can conclude using Equation~\eqref{eq def root}.
      \item Let us consider a tree $t$ in $L(E_1^{*_c})$.
        By definition, either $t=c$ or $t$ is in $t_1\cdot_c  L(E_1^{*_c})$ with $t_1\in L(E_1)\setminus\{c\}$.
        Therefore either $\mathrm{root}(t)= c$ or $\mathrm{root}(t)=\mathrm{root}(t_1)$.
        Once again, we can conclude using Equation~\eqref{eq def root}.
    \end{enumerate}
    \qed
  \end{proof}

  \begin{lemma}\label{lem father ind}
    Let $E$ be a linear expression and $f$ be a symbol in $\Sigma_k$.
    The set $\mathrm{Father}(E,f)$ is inductively computed as follows:
    \begin{align*}
      % \mathrm{Father}(0,f) &=\emptyset,\\
      \mathrm{Father}(g(E_1,...,E_n),f) &= \bigcup_{i\leq n} \mathrm{Father}(E_i,f) \cup\{(g,i)\mid f\in\mathrm{Root}(E_i)\},\\
      \mathrm{Father}(E_1+E_2,f) &= \mathrm{Father}(E_1,f)\cup \mathrm{Father}(E_2,f),\\
      \mathrm{Father}(E_1\cdot_c E_2,f) &=
        % \begin{cases}
          (\mathrm{Father}(E_1,f) \mid f \neq c) \cup \mathrm{Father}(E_2,f)\\
        & \qquad \cup (\mathrm{Father}(E_1,c) \mid f\in\mathrm{Root}(E_2))\\
      \mathrm{Father}(E_1^{*_c},f) &= \mathrm{Father}(E_1,f) \cup (\mathrm{Father}(E_1,c) \mid f\in\mathrm{Root}(E_1)),\\
          % & \text{ if } c\in \mathrm{Pos}(E_1),\\
        % \mathrm{Father}(E_1,f) & \text{otherwise,}
        % \end{cases}
    \end{align*}
    where $E_1,\ldots,E_n$ are $n$ regular expressions over $\Sigma$, $g$ is a symbol in $\Sigma_n$ and $c$ is a symbol in $\Sigma_0$.
  \end{lemma}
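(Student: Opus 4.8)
The plan is to establish the four identities by a case analysis on the outermost operator of $E$, each time unfolding the definition $\mathrm{Father}(E,f)=\bigcup_{t\in L(E)}\mathrm{father}(t,f)$ of Equation~\eqref{eq def father}, rewriting $L(E)$ through the clauses that define it on the immediate subexpressions, and then distributing the big union using the recursion for $\mathrm{father}$ on trees, Equation~\eqref{eq father tree}, together with the characterisation of roots given by Equation~\eqref{eq def root} and Lemma~\ref{lem root ind}. A routine structural induction shows that every expression of our fragment denotes a nonempty language (there is no constant for $\emptyset$, and each constructor preserves nonemptiness); this is used repeatedly so that the subtrees occurring in a tree of $L(E)$ may be chosen independently, which is what lets a union indexed by $L(E)$ be split into unions indexed by the languages of the subexpressions.

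The two easy cases come first. For $E=g(E_1,\dots,E_n)$, every tree of $L(E)$ has the form $g(t_1,\dots,t_n)$ with $t_i\in L(E_i)$; applying Equation~\eqref{eq father tree} and taking the union over all such trees turns $\bigcup_{i}\mathrm{father}(t_i,f)$ into $\bigcup_{i}\mathrm{Father}(E_i,f)$, while the pair $(g,i)$ is produced by some tree exactly when some $t_i\in L(E_i)$ has root $f$, that is, when $f\in\mathrm{Root}(E_i)$ by Equation~\eqref{eq def root}. For $E=E_1+E_2$ one only uses $L(E_1+E_2)=L(E_1)\cup L(E_2)$ and the distributivity of $\bigcup_{t\in L(E)}$ over this union.

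The substitution cases are the heart of the argument. For $E=E_1\cdot_c E_2$, a tree $t\in L(E)$ is obtained from some $t_1\in L(E_1)$ by replacing every $c$-labelled leaf of $t_1$ by a tree of $L(E_2)$; I would classify a pair $(g,i)\in\mathrm{father}(t,f)$ by the location in $t$ of the node labelled $g$ that witnesses it: (i) $g$ and its $i$-th child both lie inside one of the inserted $L(E_2)$-trees, contributing exactly $\mathrm{Father}(E_2,f)$; (ii) $g$ and its $i$-th child both come from $t_1$ and in $t_1$ that child is not a $c$-leaf, so it keeps its root in $t$ — this contributes precisely the pairs of $\mathrm{Father}(E_1,f)$ for which $f\neq c$ (a child rooted at $c$ would be a $c$-leaf), i.e. $(\mathrm{Father}(E_1,f)\mid f\neq c)$; (iii) $g$ comes from $t_1$ but its $i$-th child is one of the substituted $c$-leaves, whose root in $t$ is now an element of $\mathrm{Root}(E_2)$, contributing $(\mathrm{Father}(E_1,c)\mid f\in\mathrm{Root}(E_2))$. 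Taking the union over all $t\in L(E)$ and using nonemptiness to realise each pair yields the stated formula. The case $E=E_1^{*_c}$ is treated the same way after writing $L(E_1^{*_c})=\bigcup_{k\geq 0}L(E_1)^{c,k}$ and running an inner induction on $k$ (equivalently, unfolding once: $t=c$, or $t\in t_1\cdot_c L(E_1^{*_c})$ with $t_1\in L(E_1)$); the only differences are that the two sides of the substitution are now the same language and that a $c$-leaf of a copy of $E_1$ need not be replaced, since the recursion may stop at $L(E_1)^{c,0}=\{c\}$ — this is why there is no $f\neq c$ guard on the $\mathrm{Father}(E_1,f)$ term and why the extra term is $(\mathrm{Father}(E_1,c)\mid f\in\mathrm{Root}(E_1))$, arising when a replaced $c$-leaf receives a tree rooted in $\mathrm{Root}(E_1)$.

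The delicate point I expect to be the bookkeeping at substitution boundaries in cases (ii) and (iii): one has to check carefully that when $f=c$ the pairs coming from $c$-children of $t_1$ are genuinely absent because $\cdot_c$ rewrites every $c$-leaf, whereas for $E_1^{*_c}$ they may persist, and one has to invoke the standing convention that $c$ occurs in $E_1$ so that some tree of $L(E_2)$ (respectively, of a further copy of $E_1$) really does appear as a subtree, which is what makes the term $\mathrm{Father}(E_2,f)$ genuinely present rather than spurious. Isolating these boundary facts in a small auxiliary lemma describing $\mathrm{father}(t_1\cdot_c u,\,f)$ for a fixed tree $t_1$ and replacement trees $u$ would be the cleanest way to discharge this step; everything else is routine set manipulation.
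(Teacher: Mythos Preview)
Your proposal is correct and follows essentially the same route as the paper: a case analysis on the outermost operator that unfolds Equation~\eqref{eq def father}, applies the tree-level recursion of Equation~\eqref{eq father tree}, and uses Equation~\eqref{eq def root} for the root conditions, with an inner induction on $k$ for the star. Your treatment of $E_1\cdot_c E_2$ is organised by the location of the witnessing node (inside an inserted $E_2$-tree, fully inside $t_1$, or straddling the boundary) rather than by the dichotomy $f=c$ versus $f\neq c$ that the paper uses, and you are more explicit about nonemptiness and the boundary bookkeeping the paper leaves implicit, but these are presentational differences rather than a different argument.
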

  \begin{proof}
    Let us consider the following cases.
    \begin{enumerate}
      \item The case when $E=g(E_1,\ldots,E_n)$ is a direct consequence of Equation~\eqref{eq father tree} and Equation~\eqref{eq def root}.
      \item Let us consider a tree $t$ in $L(E_1+E_2)$.
        Consequently, $t$ is in $L(E_1)$ or in $L(E_2)$.
        And we can conclude using Equation~\eqref{eq father tree}.
      \item Let us consider a tree $t=t_1 \cdot_c L(E_2)$ with $t_1\in L(E_1)$.
        By definition, $t$ equals $t_1$ where the occurrences of $c$ have been replaced by some trees $t_2$ in $L(E_2)$.
        Two cases may occur.
        \begin{enumerate}
          \item If $c \neq f$, then a predecessor of the symbol $f$ in $t$ can be a predecessor of the symbol $f$ in a tree $t_2$ in $L(E_2)$, a predecessor of the symbol $f$ in $t_1$, or a predecessor of $c$ in $t_1$ if an occurrence of $c$ in $t_1$ has been replaced by a tree $t_2$ in $L(E_2)$ the root of which is $f$.
          \item If $c = f$, since the occurrences of $c$ have been replaced by some trees $t_2$ of $L(E_2)$, a predecessor of the symbol $c$ in $t$ can be a predecessor of the symbol $c$ in a tree $t_2$ in $L(E_2)$, or a predecessor of $c$ in $t_1$ if an occurrence of $c$ has been replaced by itself (and therefore if it appears in $L(E_2)$).
        \end{enumerate}
        And we can conclude using Equation~\eqref{eq father tree}  and Equation~\eqref{eq def root}.
      \item By definition, $L(E_1^{*_c}) = \bigcup_{k\geq 0} L(E_1)^{c,k}$.
        Therefore, a tree $t$ in $L(E_1^{*_c})$ is either $c$ or a tree $t_1$ in $L(E_1)$ where the occurrences of $c$ have been replaced by some trees $t_2$ in $L(E_1)^{c,k}$ for some integer $k$.
        Let us then proceed by recursion over this integer $k$.
        If $k = 1$, a predecessor of $f$ in $t$ is a predecessor of $f$ in $t_1$, a predecessor of $f$ in a tree $t_2$ in $L(E_1)^{c,1}$ or a predecessor of $c$ in $t_1$ if an occurrence of $c$ in $t_1$ was substituted by a tree $t_2$ in $L(E_1)^{c,1}$ the root of which is $f$, \emph{i.e.}
        \begin{equation*}
            \mathrm{Father}(E_1^{c,2},f) = \mathrm{Father}(E_1,f) \cup (\mathrm{Father}(E_1,c) \mid f\in\mathrm{Root}(E_1)).
        \end{equation*}
        By recursion over $k$ and by applying the same reasoning, it can be shown that each recursion step adds $\mathrm{Father}(E_1,f)$ to the result of the previous step, and therefore
        \begin{equation*}
            \mathrm{Father}(E_1^{c,k},f) = \mathrm{Father}(E_1,f) \cup (\mathrm{Father}(E_1,c) \mid f\in\mathrm{Root}(E_1)).
        \end{equation*}
    \end{enumerate}
    \qed
  \end{proof}
  Let us now show how these functions characterize, for a tree $t$, the membership of $t$ in the language denoted by an expression.
  \begin{definition}
    Let $E$ be a linear expression over a ranked alphabet $\Sigma$ and $t$ be a tree in $T(\Sigma)$.
    The property $P(t)$ is the property defined by
    \begin{equation*}
      \forall s=f(t_1,\ldots,t_n) \prec t, \forall i\leq n, (f,i)\in \mathrm{Father}(E,\mathrm{root}(t_i)).
    \end{equation*}
  \end{definition}
  \begin{proposition}\label{prop caract tree in lang}
    Let $E$ be a linear expression over a ranked alphabet $\Sigma$ and $t$ be a tree in $T(\Sigma)$.
    Then \textbf{(1)} $t$ is in $L(E)$ if and only if \textbf{(2)} $\mathrm{root}(t)$ is in $\mathrm{Root}(E)$ and $P(t)$ is satisfied.
  \end{proposition}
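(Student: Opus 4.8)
The statement is an "if and only if" that characterizes membership in $L(E)$ purely locally — via the root and via every father/child pair appearing in $t$. I would prove it by induction on the structure of the linear expression $E$, using Lemma~\ref{lem root ind} and Lemma~\ref{lem father ind} as the computational engines. The two directions can be handled simultaneously if one is careful: the forward direction $(1)\Rightarrow(2)$ is essentially Equations~\eqref{eq def root} and \eqref{eq def father} (any tree in $L(E)$ has its root in $\mathrm{Root}(E)$ and, by definition of $\mathrm{Father}$, every $\mathrm{father}(s,\cdot)$ pair of a subtree $s$ is collected), so it is almost immediate and does not even need induction. The real content is the converse $(2)\Rightarrow(1)$: knowing only $\mathrm{root}(t)\in\mathrm{Root}(E)$ and $P(t)$, reconstruct a derivation of $t$ in $L(E)$.

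\textbf{Forward direction.} I would dispatch this first and quickly. If $t\in L(E)$ then $\mathrm{root}(t)\in\mathrm{Root}(E)$ by Equation~\eqref{eq def root}. For $P(t)$: take any $s=f(t_1,\ldots,t_n)\prec t$ and any $i\leq n$; then $\mathrm{root}(s_i)=\mathrm{root}(t_i)$ and $(f,i)\in\mathrm{father}(t,\mathrm{root}(t_i))$ by Equation~\eqref{def fath} (taking the witness $g(s_1,\ldots,s_l)=s$), hence $(f,i)\in\mathrm{Father}(E,\mathrm{root}(t_i))$ by Equation~\eqref{eq def father}. Note $P(t)$ is monotone: $P(t)$ holds iff $P$ holds for every immediate subtree together with the root compatibility $(\mathrm{root}(t),i)\in\mathrm{Father}(E,\mathrm{root}(t_i))$, which is the inductive skeleton used below.

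\textbf{Converse direction, by induction on $E$.} For $E=f(E_1,\ldots,E_k)$: from $\mathrm{root}(t)\in\mathrm{Root}(E)=\{f\}$ we get $t=f(t_1,\ldots,t_k)$; from $P(t)$, each $(f,i)\in\mathrm{Father}(E,\mathrm{root}(t_i))$, which by Lemma~\ref{lem father ind} forces $\mathrm{root}(t_i)\in\mathrm{Root}(E_i)$ (the only way to produce the pair $(f,i)$ in $\mathrm{Father}(f(E_1,\ldots,E_k),\cdot)$); since $P(t_i)$ also holds and $\mathrm{Father}(E_i,\cdot)\subseteq\mathrm{Father}(E,\cdot)$, the induction hypothesis gives $t_i\in L(E_i)$, so $t\in L(E)$. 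The case $E=E_1+E_2$ is routine: $\mathrm{root}(t)\in\mathrm{Root}(E_1)$ or $\mathrm{Root}(E_2)$, say the former; $\mathrm{Father}(E_1,\cdot)\subseteq\mathrm{Father}(E,\cdot)$ need not hold in general, but here it does since $\mathrm{Father}(E_1+E_2,g)=\mathrm{Father}(E_1,g)\cup\mathrm{Father}(E_2,g)$ — wait, that inclusion is fine, but $P(t)$ with respect to $E$ does not directly give $P(t)$ with respect to $E_1$. This is the subtle point: one needs that all father-pairs occurring in $t$ actually come from the $E_1$-side. I expect this to require strengthening the induction, or arguing by a separate structural lemma that if $\mathrm{root}(t)\in\mathrm{Root}(E_1)$ and every pair in $t$ lies in $\mathrm{Father}(E_1,\cdot)\cup\mathrm{Father}(E_2,\cdot)$, then in fact they all lie in $\mathrm{Father}(E_1,\cdot)$ — which holds because $E$ is \emph{linear}, so the non-nullary symbols appearing under $\mathrm{root}(t)$ are disjoint between $E_1$ and $E_2$. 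Linearity is the key hypothesis that makes the local conditions sufficient.

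\textbf{The substitution and star cases, and the main obstacle.} For $E=E_1\cdot_c E_2$: decompose $t$ according to which occurrences of $c$ in an $E_1$-tree got replaced by $E_2$-trees. Concretely, one shows there is a tree $t_1\in L(E_1)$ and trees $t_2^{(j)}\in L(E_2)$ such that $t=t_1\cdot_c\{t_2^{(j)}\}$, by locating the "frontier" in $t$: a node $u$ of $t$ is a grafting point if $\mathrm{root}$ of the subtree at $u$ lies in $\mathrm{Root}(E_2)$ but the father pair at $u$ (with its parent) is only available via the $\mathrm{Father}(E_1,c)$ clause of Lemma~\ref{lem father ind}. Above the frontier, linearity again forces every father-pair to be an $E_1$-pair, so the "collapsed" tree $t_1$ (replace each frontier subtree by $c$) satisfies $\mathrm{root}(t_1)\in\mathrm{Root}(E_1)$ and $P(t_1)$ w.r.t. $E_1$, hence $t_1\in L(E_1)$ by induction; below the frontier each subtree satisfies the conditions for $E_2$, hence lies in $L(E_2)$. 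For $E=E_1^{*_c}$: either $t=c$ (then $\mathrm{root}(t)=c\in\mathrm{Root}(E)$ and there are no subtrees, so we are done), or iterate the $\cdot_c$-argument, peeling off one layer of $L(E_1)$ at a time; the father-formula $\mathrm{Father}(E_1^{*_c},f)=\mathrm{Father}(E_1,f)\cup(\mathrm{Father}(E_1,c)\mid f\in\mathrm{Root}(E_1))$ is exactly what is needed to glue consecutive layers, and one recurses on the (finite) height of $t$. The hard part, throughout, is making the frontier-decomposition precise and verifying that linearity genuinely forbids "cross-talk" between the two sides of a product — i.e., that a father-pair $(g,i)$ with $g$ a non-nullary symbol of $E_1$ can never be forced by $P(t)$ to sit inside an $E_2$-region. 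I would isolate this as a small auxiliary lemma ("for linear $E$, the positions reachable above a given position are determined") and then the four inductive cases become bookkeeping on top of Lemmas~\ref{lem root ind} and~\ref{lem father ind}.
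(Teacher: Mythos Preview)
Your proposal is correct and follows essentially the same route as the paper: the forward direction is dispatched directly from the definitions, and the converse is proved by structural induction on $E$, with the product and star cases handled by a frontier/decomposition argument (replace the maximal $E_2$-subtrees by $c$, apply the induction hypothesis on both sides) and an inner induction on $t$ for the star. The paper is actually terser than you are about the role of linearity---its sum case is one sentence and its symbol case invokes linearity only to pin down $\mathrm{root}(t_i)\in\mathrm{Root}(E_i)$---so your explicit identification of the ``no cross-talk'' issue (that $P(t)$ w.r.t.\ $E$ must be downgraded to $P$ w.r.t.\ the relevant subexpression, which linearity permits) is exactly the point the paper leaves implicit; just be sure to apply that same observation already in the $f(E_1,\ldots,E_k)$ case, where the inclusion $\mathrm{Father}(E_i,\cdot)\subseteq\mathrm{Father}(E,\cdot)$ you cite goes the wrong way for what you need.
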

  \begin{proof}
    Let us first notice that the proposition $1 \Rightarrow 2$ is direct by definition of $\mathrm{Root}$ and $\mathrm{Father}$.
    Let us show the second implication by induction over the structure of $E$.
    Hence, let us suppose that $\mathrm{root}(t)$ is in $\mathrm{Root}(E)$ and $P(t)$ is satisfied.
    \begin{itemize}
      \item Let us consider the case when $E=g(E_1,\ldots,E_n)$ and let us set $t=$ $f(t_1,\ldots,$ $t_n)$.
        Since $\mathrm{root}(t)$ is in $\mathrm{Root}(E)$, $f = g$ from Lemma~\ref{lem root ind}.
        From $P(t)$, it holds that for any $i\leq n$, $(f,i)\in \mathrm{Father}(E,\mathrm{root}(t_i))$.
        Since $E$ is linear, and following Lemma~\ref{lem father ind}, $\mathrm{root}(t_i)\in\mathrm{Root}(E_i)$.
        Consequently, from the induction hypothesis, $t_i$ is in $L(E_i)$ for any integer $i\leq n$ and $t$ belongs to $L(E)$.
      \item The case of the sum is a direct application of the induction hypothesis.
      \item Let us consider the case when $E=E_1\cdot_c E_2$.
        Let us first suppose that $\mathrm{root}(t)$ is in $\mathrm{Root}(E_2)$.
        Then $c$ is in $L(E_1)$ and $P(t)$ is equivalent to
        \begin{equation*}
          \forall s=f(t_1,\ldots,t_n) \prec t, \forall i\leq n, (f,i)\in \mathrm{Father}(E_2,\mathrm{root}(t_i)).
        \end{equation*}
        By induction hypothesis $t$ is in $L(E_2)$ and therefore in $L(E)$.

        Let us suppose now that $\mathrm{root}(t)$ is in $\mathrm{Root}(E_1)$.
        Since $E$ is linear, let us consider the subtrees $t_2$ of $t$ with only symbols of $E_2$ and a symbol of $E_1$ as a predecessor in $t$.
        Since $P(t)$ holds, according to induction hypothesis and Lemma~\ref{lem father ind}, each of these trees belongs to $L(E_2)$.
        Hence $t$ belongs to $t_1 \cdot_c L(E_2)$ where $t_1$ is equal to $t$ where the previously defined $t_2$ trees are replaced by $c$.
        Once again, since $P(t)$ holds and since $\mathrm{root}(t)$ is in $\mathrm{Root}(E_1)$, $t_1$ belongs to $L(E_1)$.

        In these two cases, $t$ belongs to $L(E)$.
      \item Let us consider the case when $E=E_1^{*_c}$.
        Let us proceed by induction over the structure of $t$.
        If $t=c$, the proposition holds from Lemma~\ref{lem root ind} and Lemma~\ref{lem father ind}.
        Following Lemma~\ref{lem father ind}, each predecessor of a symbol $f$ in $t$ is a predecessor of $f$ in $E_1$ (case \textbf{1}) or a predecessor of $c$ in $E_1$ (case \textbf{2}).
        If all the predecessors of the symbols satisfy the case \textbf{1}, then by induction hypothesis $t$ belongs to $L(E_1)$ and therefore to $L(E)$.
        Otherwise, we can consider (similarly to the catenation product case) the smallest subtrees $t_2$ of $t$ the root of which admits a predecessor in $t$ which is a predecessor of $c$ in $E_1$.
        By induction hypothesis, these trees belong to $L(E_1)$.
        And consequently $t$ belongs to $t' \cdots L(E_1)$ where $t'$ is equal to $t$ where the subtrees $t_2$ have been substituted by $c$.
        Once again, by induction hypothesis, $t'$ belongs to $L(E_1^{*_c})$.
        As a direct consequence, $t$ belongs to $L(E)$.
    \end{itemize}
    \qed
  \end{proof}
\section{Bottom-Up Position Automaton}\label{sec:BotUpAut}

  In this section, we show how to compute a Bottom-Up automaton with a linear number of states from the position functions previously defined.
  \begin{definition}\label{def glush bu}
    The \emph{Bottom-Up Position automaton} $\mathcal{P}_{E}$ of a linear expression $E$ over a ranked alphabet $\Sigma$ is the automaton $(\Sigma,\mathrm{Pos}(E),\mathrm{Root}(E),\delta)$ defined by:
    \begin{equation*}
      ((f_1,\ldots,f_n),g,g) \in \delta  \Leftrightarrow \forall i \leq n,      (g,i)\in\mathrm{Father}(E,f_i).
    \end{equation*}
  \end{definition}
  Notice that due to the linearity of $E$, $\mathcal{P}_{E}$ is deterministic.
  \begin{example}
    The Bottom-Up Position automaton $(\mathrm{Pos}(\overline{E}),\mathrm{Pos}(\overline{E}),\mathrm{Root}(\overline{E}),\delta)$ of the expression $\overline{E}$ defined in Example~\ref{ex:calcul fonctions glushkov} is defined as follows:
    \begin{gather*}
      \begin{aligned}
        \mathrm{Pos}(E) & = \{a,b,f_1,g_2,f_3,g_4\}, &
      \mathrm{Root}(\overline{E}) &= \{a,f_1,g_2\},
      \end{aligned}\\
      \begin{aligned}
        \delta &= \{(a,a), (b,b),  ((a,a),f_1,f_1), ((a,f_1),f_1,f_1), ((a,g_2),f_1,f_1), ((f_1,a),f_1,f_1),\\
        & \qquad  ((f_1,f_1),f_1,f_1), ((f_1,g_2),f_1,f_1), ((g_2,a),f_1,f_1), ((g_2,f_1),f_1,f_1),\\
        & \qquad  ((g_2,g_2),f_1,f_1), (f_3, g_2,g_2),((b,g_4),f_3,f_3), (a,g_4,g_4)\}.
      \end{aligned}
    \end{gather*}
    %is given in Figure~\ref{fig:ex:botupPosAut}.
  \end{example}
  Let us now show that the Position automaton of $E$ recognizes $L(E)$.
  \begin{lemma}\label{lem lien Delta P}
    Let $\mathcal{P}_E=(\Sigma,Q,Q_F,\delta)$ be the Bottom-Up Position automaton of a linear expression $E$ over a ranked alphabet $\Sigma$, $t$ be a tree in $T_\Sigma$ and $f$ be a symbol in $\mathrm{Pos}(E)$.
    Then
    \textbf{(1)} $f\in\Delta(t)$ if and only if \textbf{(2)} $\mathrm{root}(t) = f \wedge P(t)$.
  \end{lemma}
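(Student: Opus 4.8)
The plan is to proceed by structural induction on the tree $t=f(t_1,\ldots,t_n)$, exploiting the recursive definition of $\Delta$ and of $\delta$ in $\mathcal{P}_E$. First I would unfold $\Delta(t)=\delta(\Delta(t_1),\ldots,\Delta(t_n),f)$ via the set-extension in Equation~\eqref{eq:extDeltaEns}, so that $f\in\Delta(t)$ holds if and only if there exist states $q_i\in\Delta(t_i)$ for each $i\leq n$ with $((q_1,\ldots,q_n),f,f)\in\delta$. By Definition~\ref{def glush bu}, such a transition into $f$ forces its output state to equal $f$ and requires $(f,i)\in\mathrm{Father}(E,q_i)$ for every $i$. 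The key observation linking this to $P(t)$ is that, by the induction hypothesis applied to each $t_i$, any $q_i\in\Delta(t_i)$ must satisfy $q_i=\mathrm{root}(t_i)\wedge P(t_i)$; conversely $\mathrm{root}(t_i)\in\Delta(t_i)$ as soon as $P(t_i)$ holds. So the choice of witnesses $q_i$ is forced to be $q_i=\mathrm{root}(t_i)$, and the problem reduces to comparing the condition ``$\forall i\leq n,\ (f,i)\in\mathrm{Father}(E,\mathrm{root}(t_i))$ and $P(t_i)$ for all $i$'' with ``$\mathrm{root}(t)=f$ and $P(t)$''.

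For the forward direction, suppose $f\in\Delta(t)$. Then the witnesses above give $q_i=\mathrm{root}(t_i)\in\Delta(t_i)$, hence $P(t_i)$ holds for each $i$, and $(f,i)\in\mathrm{Father}(E,\mathrm{root}(t_i))$ for each $i$. Since $\mathrm{root}(t)=f$ because the output state of a transition labelled $f$ in $\mathcal{P}_E$ is $f$ itself, and since $P(t)$ is exactly the conjunction ``$(f,i)\in\mathrm{Father}(E,\mathrm{root}(t_i))$ for all $i\leq n$'' together with ``$P(t_i)$ for all $i$'' (the subtrees $s\prec t$ are either $t$ itself, matching the first clause, or contained in some $t_i$), we obtain $\mathrm{root}(t)=f\wedge P(t)$. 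For the converse, assume $\mathrm{root}(t)=f$ and $P(t)$. Then $P(t)$ restricted to proper subtrees inside each $t_i$ yields $P(t_i)$, so by the induction hypothesis $\mathrm{root}(t_i)\in\Delta(t_i)$. The clause of $P(t)$ for $s=t$ gives $(f,i)\in\mathrm{Father}(E,\mathrm{root}(t_i))$ for all $i\leq n$, which by Definition~\ref{def glush bu} is precisely the condition for $((\mathrm{root}(t_1),\ldots,\mathrm{root}(t_n)),f,f)\in\delta$; hence $f\in\delta(\Delta(t_1),\ldots,\Delta(t_n),f)=\Delta(t)$.

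The base case $n=0$ (a constant $c\in\Sigma_0$) should be handled first: here $\Delta(c)=\delta((),c)$, which contains $c$ precisely when the empty conjunction holds, i.e. always, matching $\mathrm{root}(c)=c$ and the vacuously true $P(c)$ — though one must be careful that $c$ is indeed in $\mathrm{Pos}(E)$ only when $c$ occurs in $E$; since the statement fixes $f\in\mathrm{Pos}(E)$, this is consistent. The main obstacle I anticipate is the careful bookkeeping in relating $P(t)$ to its ``localization'': one must argue cleanly that the set of proper subtrees $s\prec t$ decomposes as $\{t\}$ (when we view $t$ as the outermost node giving the clause for $i\leq n$) together with the disjoint union of the subtrees of the $t_i$, so that $P(t)\iff \bigl(\forall i\leq n,\ (f,i)\in\mathrm{Father}(E,\mathrm{root}(t_i))\bigr)\wedge\bigl(\forall i\leq n,\ P(t_i)\bigr)$. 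Once this decomposition is stated precisely, the rest is a direct chase through the definitions, and the linearity of $E$ (hence determinism of $\mathcal{P}_E$) is not even needed for this particular lemma.
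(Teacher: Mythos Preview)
Your proposal is correct and follows essentially the same route as the paper's proof: structural induction on $t$, unfolding $\Delta(t)=\delta(\Delta(t_1),\ldots,\Delta(t_n),\mathrm{root}(t))$, invoking the induction hypothesis to force $q_i=\mathrm{root}(t_i)$, and then matching the transition condition of Definition~\ref{def glush bu} against the recursive decomposition of $P(t)$. You are in fact more explicit than the paper on the base case, on the converse direction, and on the decomposition $P(t)\iff\bigl(\forall i,\ (f,i)\in\mathrm{Father}(E,\mathrm{root}(t_i))\bigr)\wedge\bigl(\forall i,\ P(t_i)\bigr)$, which the paper dispatches by a reference to Equation~\eqref{eq father tree} and the remark that ``only equivalences are considered''.
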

  \begin{proof}
    Let us proceed by induction over the structure of $t=f(t_1,\ldots,t_n)$.
    By definition, $\Delta(t) = \delta(\Delta(t_1),\ldots,\Delta(t_n),f)$.
    For any state $f_i$ in $\Delta_i$, it holds from the induction hypothesis that
    \begin{equation}\label{eq:hypIndLemAutPosLienP}
      f_i\in\Delta(t_i) \Leftrightarrow \mathrm{root}(t_i) = f_i \wedge P(t_i).\tag{*}
    \end{equation}
    Then, suppose that \textbf{(1)} holds (\emph{i.e.} $f\in\Delta(t)$).
    Equivalently, there exists by definition of $\mathcal{P}_E$ a transition $((f_1,\ldots,f_n),f,f)$ in $\delta$ such that $f_i$ is in $\Delta(t_i)$ for any integer $i\leq n$.
    Consequently, $f$ is the root of $t$.
    Moreover, from the equivalence stated in Equation~\eqref{eq:hypIndLemAutPosLienP}, $\mathrm{root}(t_i) = f_i $ and $P(t_i)$ holds for any integer $i\leq n$.
    Finally and equivalently, $P(t)$ holds as a consequence of Equation~\eqref{eq father tree}.
    The reciprocal condition can be proved similarly since only equivalences are considered.
    \qed
  \end{proof}
  As a direct consequence of Lemma~\ref{lem lien Delta P} and Proposition~\ref{prop caract tree in lang},
  \begin{proposition}
    The Bottom-Up Position automaton of a linear expression $E$ recognizes $L(E)$.
  \end{proposition}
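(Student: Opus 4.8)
The plan is to unfold the definition of the language recognized by a tree automaton and then chain together the two results that precede the statement. By Definition of $L(A)$, a tree $t\in T_\Sigma$ belongs to $L(\mathcal{P}_E)$ precisely when $\Delta(t)\cap Q_F\neq\emptyset$, and here the set of final states $Q_F$ is $\mathrm{Root}(E)$. So the first step is to rewrite $t\in L(\mathcal{P}_E)$ as: there exists a position $f\in\mathrm{Root}(E)$ with $f\in\Delta(t)$.

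Next I would invoke Lemma~\ref{lem lien Delta P}, which states that $f\in\Delta(t)$ if and only if $\mathrm{root}(t)=f$ and $P(t)$ holds. Substituting this equivalence into the condition above, $t\in L(\mathcal{P}_E)$ becomes: there exists $f\in\mathrm{Root}(E)$ such that $\mathrm{root}(t)=f$ and $P(t)$ holds. Since the witness $f$ is forced to be $\mathrm{root}(t)$, this simplifies to the single statement $\mathrm{root}(t)\in\mathrm{Root}(E)$ together with $P(t)$.

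Finally, this last condition is exactly assertion \textbf{(2)} of Proposition~\ref{prop caract tree in lang}, whose equivalence with \textbf{(1)}, namely $t\in L(E)$, closes the argument: $t\in L(\mathcal{P}_E)\iff t\in L(E)$, hence $L(\mathcal{P}_E)=L(E)$. There is no real obstacle here; the proof is a bookkeeping composition of the two previously established results, and the only point requiring a line of care is the observation that the existential quantifier over final states collapses because $\Delta(t)$ can only contain $\mathrm{root}(t)$ as witnessed by Lemma~\ref{lem lien Delta P}. All the genuine work was done in establishing Lemma~\ref{lem father ind}, Proposition~\ref{prop caract tree in lang}, and Lemma~\ref{lem lien Delta P}.
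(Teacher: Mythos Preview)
Your proposal is correct and matches the paper's own argument exactly: the paper simply states that the proposition is a direct consequence of Lemma~\ref{lem lien Delta P} and Proposition~\ref{prop caract tree in lang}, and your write-up is precisely the unfolded version of that chain of equivalences. The only extra detail you provide, the collapse of the existential over final states because Lemma~\ref{lem lien Delta P} forces the witness to be $\mathrm{root}(t)$, is implicit in the paper and correctly handled.
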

  The Bottom-Up Position automaton of a (not necessarily linear) expression $E$ can be obtained by first computing the Bottom-Up Position automaton of its linearized expression $\overline{E}$ and then by applying the alphabetical morphism $\mathrm{h}$.
  As a direct consequence of Equation~\eqref{eq:lienMorpAutLang},
  \begin{proposition}
    The Bottom-Up Position automaton of an expression $E$ recognizes $L(E)$.
  \end{proposition}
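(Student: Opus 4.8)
The plan is to reduce the general case to the linear case already established in the preceding proposition, using the delinearization morphism $\mathrm{h}$. By construction, the Bottom-Up Position automaton of an arbitrary expression $E$ is the image $\mathrm{h}(\mathcal{P}_{\overline{E}})$ of the Bottom-Up Position automaton of the linearized expression $\overline{E}$ under the alphabetical morphism $\mathrm{h}$; and since $\overline{E}$ is linear, the preceding proposition gives $L(\mathcal{P}_{\overline{E}}) = L(\overline{E})$.

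First I would instantiate Equation~\eqref{eq:lienMorpAutLang} with $\phi = \mathrm{h}$ and $A = \mathcal{P}_{\overline{E}}$, which yields
\begin{equation*}
  L(\mathcal{P}_E) = L(\mathrm{h}(\mathcal{P}_{\overline{E}})) = \mathrm{h}(L(\mathcal{P}_{\overline{E}})) = \mathrm{h}(L(\overline{E})).
\end{equation*}
Thus the statement reduces to the identity $\mathrm{h}(L(\overline{E})) = L(E)$, i.e.\ to the fact that delinearizing the trees accepted by the linearized expression recovers exactly the language denoted by $E$.

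Then I would prove $\mathrm{h}(L(\overline{E})) = L(E)$ by structural induction on $E$, handling in turn the cases $E = f(E_1,\ldots,E_k)$, $E = E_1 + E_2$, $E = E_1 \cdot_c E_2$ and $E = E_1^{*_c}$. In each case $\overline{E}$ is obtained from the linearizations of the immediate subexpressions by the same top-level operator (up to a renaming of the position indices, which $\mathrm{h}$ erases), and $\mathrm{h}$ commutes with that operator on languages: $\mathrm{h}(L_1 \cup L_2) = \mathrm{h}(L_1) \cup \mathrm{h}(L_2)$, $\mathrm{h}(\{f(t_1,\ldots,t_k)\mid t_j\in L_j\}) = \{f(s_1,\ldots,s_k)\mid s_j\in \mathrm{h}(L_j)\}$, and—because $\mathrm{h}$ is the identity on $\Sigma_0$—$\mathrm{h}(L_1 \cdot_c L_2) = \mathrm{h}(L_1) \cdot_c \mathrm{h}(L_2)$ and $\mathrm{h}(L_1^{*_c}) = \mathrm{h}(L_1)^{*_c}$. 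Combining these with the induction hypothesis $\mathrm{h}(L(\overline{E_i})) = L(E_i)$ closes each case. (One may equivalently summarize this as $\mathrm{h}(L(\overline{E})) = L(\mathrm{h}(\overline{E})) = L(E)$, the first equality being the commutation of the alphabetical morphism $\mathrm{h}$ with the language-defining constructors.)

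The reduction and the automaton-morphism bookkeeping of the first two paragraphs are routine, given Equation~\eqref{eq:lienMorpAutLang} and the linear-case proposition. The only place needing genuine care is the induction of the last paragraph, and within it the substitution and iterated-substitution cases: one must check that the constant $c$ used to glue subexpressions is treated consistently on both sides. This holds precisely because $\mathrm{h}$ fixes every $0$-ary symbol and because the linearization $\overline{E}$ never indexes such symbols, so no constant is ever renamed and the substitutions on $L(\overline{E})$ and on $L(E)$ stay in perfect correspondence.
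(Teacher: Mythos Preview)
Your proposal is correct and follows exactly the route the paper takes: the paper's entire argument is the one-line remark ``as a direct consequence of Equation~\eqref{eq:lienMorpAutLang}'' after defining $\mathcal{P}_E = \mathrm{h}(\mathcal{P}_{\overline{E}})$, implicitly treating $\mathrm{h}(L(\overline{E})) = L(E)$ as a standard fact. You simply make that last step explicit with a structural induction, which is more detailed than the paper but not a different approach.
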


\section{Compressed Bottom-Up Position Automaton}\label{sec:CompBotUpAut}
  In this section, we show that the structure of an expression allows us to factorize the transitions of a tree automaton by only considering the values of the $\mathrm{Father}$ function.
  The basic idea of the factorizations is to consider the cartesian product of sets.
  Imagine that a tree automaton contains four binary transitions $(q_1,q_1,f,q_3)$, $(q_1,q_2,f,q_3)$, $(q_2,q_1,f,q_3)$ and $(q_2,q_2,f,q_3)$.
  These four transitions can be factorized as a \emph{compressed transition} $(\{q_1,q_2\},\{q_1,q_2\},f,q_3)$ using set of states instead of sets.
  The behavior of the original automaton can be simulated by considering the cartesian product of the origin states of the transition.

  We first show how to encode such a notion of compressed automaton and how it can be used in order to solve the membership test.

  \begin{definition}
    A \emph{compressed tree automaton} over a ranked alphabet $\Sigma$ is a $4$-tuple $(\Sigma,Q,Q_F,\delta)$ where $Q$ is a set \emph{of states}, $Q_F\subset Q$ is the set of \emph{final states}, $\delta \subset (2^Q)^n\times \Sigma_n\times 2^Q$ is the set of \emph{compressed transitions} that can be seen as a function from $(2^{Q})^k\times\Sigma_k$ to $2^Q$ defined by
    \begin{equation*}
      (Q_1,\ldots,Q_k,f,q)\in\delta \Leftrightarrow q\in \delta(Q_1,\ldots,Q_k,f).
    \end{equation*}
  \end{definition}
  \begin{example}\label{ex:aut compr}
    %\textbf{\textcolor{red}{Donner un exemple d'automate compressé : structure de données + représentation graphique.}}
  Let us consider the compressed automaton $A=(\Sigma,Q,Q_F,\delta)$ shown in Figure~\ref{exp comp aut}.
  Its transitions are
  \begin{equation*}
    \delta = \{ (\{1,2,5\},\{3,4\},f,1), (\{2,3,5\},\{4,6\},f,2),
      (\{1,2\},\{3\},f,5),(\{6\},g,4), (\{6\},g,5), (a,6), (a,4), (b,3)\}.
  \end{equation*}
\end{example}

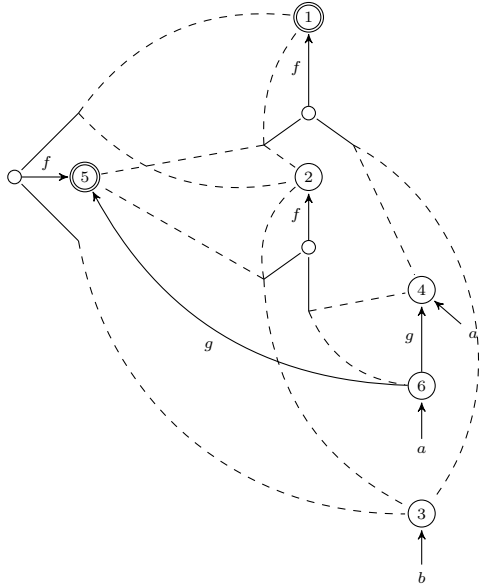
\begin{figure}[H]
    \centerline{
    \begin{tikzpicture}[node distance=2.5cm,bend angle=30,transform shape,scale=0.85]
    %states
      \node[state] (2)  {$2$} ;
     \node[state, double,above of=2, node distance = 2.5cm] (1) {$1$};
     \node[state,double,  left  of=2,node distance=3.5cm] (5) {$5$};
      \node[state, below right of=2,node distance = 2.5cm] (4) {$4$};
      \node[state, below of=4, node distance = 3.5cm] (3) {$3$};
      \node[state,below  of=4,node distance=1.5cm] (6) {$6$};
       \node[state, below of=1,node distance=1.5cm] (cerc) {};
       \node[state,below of=2,node distance=1.1cm] (cerc1) {};
       \node[,state,left of=5,node distance=1.1cm] (cerc2) {};
       %transitions0
        \draw (3) ++(0cm,-1cm) node {$b$}  edge[->] (3);
        \draw (6) ++(0cm,-1cm) node {$a$}  edge[->] (6);
        \draw[left] (4) ++(1cm,-0.7cm) node {$a$}  edge[above,->] (4);
        %transitions1
         \path[->]
       (cerc) edge[->, left] node {$f$} (1)
       (cerc1) edge[->, left] node {$f$} (2)
       %(3) edge[->, bend right] node[left] {$g$} (4)
       (6) edge[->, left] node {$g$} (4)
       (6) edge[->, below left,bend left] node {$g$} (5)
       (cerc2) edge[->, above] node {$f$} (5);
       %transition2
       \draw (cerc) ++(-0.7 cm,-0.5cm)edge node [above,pos=0.5] {} (cerc)  edge[dashed]node[left,above,pos=0.6]{} (5)  edge[dashed] node[left,above,pos=0.5]{}(2) edge[dashed,bend left=20] node[bend left=20]{}(1);
       \draw (cerc) ++(0.7 cm,-0.5cm)edge node [above,pos=0.5] {} (cerc)  edge[dashed,bend left=50 ]node[left,above,pos=0.6]{} (3)  edge[dashed] node[left,above,pos=0.5]{}(4);
      \draw (cerc1) ++(-0.7 cm,-0.5cm)edge node [above,pos=0.5] {} (cerc1)  edge[dashed,bend left=30 ]node[left,above,pos=0.6]{} (2)  edge[dashed] node[left,above,pos=0.5]{}(5) edge[dashed,bend right=30 ]node[left,above,pos=0.6]{} (3)  ;
      \draw (cerc1) ++(0cm,-1cm)edge node [right,pos=0.5] {} (cerc1)  edge[dashed,bend right=30 ]node[left,above,pos=0.6]{} (6)
       edge[dashed] node[left,above,pos=0.5]{}(4);
       \draw (cerc2) ++(1cm,1cm)edge node [left,pos=0.5] {} (cerc2)  edge[dashed,bend left=30 ]node[left,above,pos=0.6]{} (1)  edge[dashed,bend right=30 ]node[left,above,pos=0.6]{} (2);
       \draw (cerc2) ++(1cm,-1cm)edge node [left,pos=0.5] {} (cerc2)  edge[dashed,bend right=40 ]node[left,above,pos=0.6]{} (3);
    \end{tikzpicture}}
    \caption{ The compressed automaton $A$.}
    \label{exp comp aut}
\end{figure}
  The transition function $\delta$ can be restricted to a function from $Q^n\times \Sigma_n$ to $2^Q$ (\emph{e.g.} in order to simulate the behavior of an uncompressed automaton) by considering for a tuple $(q_1,\ldots,q_k)$ of states and a symbol $f$ in $\Sigma_k$ all the "active" transitions $(Q_1,\ldots,Q_k,f,q)$, that are the transitions where $q_i$ is in $Q_i$ for $i\leq k$.
  More formally, for any states $(q_1,\ldots,q_k)$ in $Q^k$, for any symbol $f$ in $\Sigma_k$,
    \begin{equation}\label{eq:extdeltaEnsComp}
      \delta(q_1,\ldots,q_k,f) = \bigcup_{\substack{(Q_1,\ldots, Q_k, f,q)\in\delta,\\ \forall i\leq k, q_i\in Q_i}} \{q\}.
    \end{equation}
  The transition set $\delta$ can be extended to a function $\Delta$ from $T(\Sigma)$ to $2^Q$ by inductively considering, for a tree $f(t_1,\ldots,t_k)$ the "active" transitions $(Q_1,\ldots,Q_k,$ $f,q)$ once a subtree is read, that is when $\Delta(q_i)$ and $Q_i$ admits a common state for $i\leq k$.
  More formally, for any tree $t=f(t_1,\ldots,t_k)$ in $T(\Sigma)$,
  \begin{equation*}
    \Delta(t) = \bigcup_{\substack{(Q_1,\ldots,Q_k,f,q)\in\delta,\\ \forall i\leq k, \Delta(t_i)\cap Q_i\neq\emptyset}} \{q\}.
  \end{equation*}
  As a direct consequence of the two previous equations,
  \begin{equation}\label{eq:lienExtsTransComp}
    \Delta(f(t_1,\ldots,t_n)) = \bigcup_{(q_1,\ldots,q_n)\in\Delta(t_1)\times\cdots\times\Delta(t_n)} \delta(q_1,\ldots,q_n,f).
  \end{equation}
  The \emph{language recognized by} a compressed automaton $A=(\Sigma,Q,Q_F,\delta)$ is the subset $L(A)$ of $T(\Sigma)$ defined by
  \begin{equation*}
    L(A) = \{t\in T(\Sigma) \mid \Delta(t)\cap Q_F\neq\emptyset\}.
  \end{equation*}
  \begin{example}
    %\textbf{\textcolor{red}{Reprendre Example~\ref{ex:aut compr}, donner un mot du langage et présenter le calcul de $\Delta$ pour expliciter fonctionnement.}}
    Let us consider the automaton of Figure~\ref{exp comp aut} and let us show that the tree $t=f(f(b,a),g(a))$ belongs to $L(A)$.
    In order to do so, let us compute $\Delta(t')$ for each subtree $t'$ of $t$.
    First, by definition,
    \begin{align*}
      \Delta(a)&=\{4,6\}, & \Delta(b)&=\{3\}.
    \end{align*}
    Since the only transition in $\delta$ labeled by $f$ containing $3$ in its first origin set and $4$ or $6$ in its second is the transition $(\{2,3,5\},\{4,6\},f,2)$,
    \begin{equation*}
      \Delta(f(b,a)) = \{2\}.
    \end{equation*}
    Since the two transitions labeled by $g$ are $(\{6\},g,4)$ and $(\{6\},g,5)$,
    \begin{equation*}
      \Delta(g(a)) = \{4,5\}.
    \end{equation*}
    Finally, there are two transitions labeled by $f$ containing $2$ in their first origin and $4$ or $5$ in its second: $(\{2,3,5\},\{4,6\},f,2)$ and $(\{1,2,5\},\{3,4\},f,1)$.
    Therefore
    \begin{equation*}
      \Delta(f(f(b,a),g(a)) = \{1,2\}.
    \end{equation*}
    Finally, since $1$ is a final state, $t\in L(A)$.
  \end{example}
  Let $\phi$ be an alphabetical morphism between two alphabets $\Sigma$ and $\Sigma'$.
  The \emph{image by} $\phi$ of a compressed automaton $A=(\Sigma,Q,Q_F,\delta)$ is the compressed automaton $\phi(A)=(\Sigma',Q,Q_F,\delta')$ where
  \begin{equation*}
    \delta'=\{(Q_1,\ldots,Q_n,\phi(f),q)\mid (Q_1,\ldots,Q_n,f,q) \in \delta\}.
  \end{equation*}
  By a trivial induction over the structure of the trees, it can be shown that
  \begin{equation}\label{eq:lien morph lang compress}
    L(\phi(A)) = \phi(L(A)).
  \end{equation}
  Due to their inductive structure, regular expressions are naturally factorizing the structure of transitions of a Glushkov automaton.
  Let us now define the compressed Position automaton of an expression.
  \begin{definition}\label{def glu comp}
    The \emph{compressed Bottom-Up Position automaton} $\mathcal{C}(E)$ of a linear expression $E$ is the automaton $(\Sigma,\mathrm{Pos}(E),\mathrm{Root}(E),\delta)$ defined by
    \begin{equation*}
      \delta = \{(Q_1,\ldots,Q_k,f,\{f\}) \mid Q_i = \{g \mid (f,i) \in \mathrm{Father}(E,g)\} \}.
    \end{equation*}
  \end{definition}
  \begin{example}
    Let us consider the expression $\overline{E}$ defined in Example~\ref{ex:calcul fonctions glushkov}.
    The compressed automaton of $\overline{E}$ is represented at Figure~\ref{linear_compressed_automata}.
  \end{example}
\begin{figure}[H]
  \centerline{
  \begin{tikzpicture}[node distance=2.5cm,bend angle=30,transform shape,scale=1]
    \node[accepting ,state] (f1)  {$f_1$} ;
    \node[state,double,  above right of=f1, node distance =2cm] (g_2) {$g_2$};
     \node[state, below right of=g_2, node distance = 3cm] (f_3) {$f_3$};
    \node[state, below of= f_3,node distance=1.7cm] (b) {$b$};
    \node[state,  below left of=f_3] (g_4) {$g_4$};
    \node[state,  below of=f1,node distance = 1cm] (cerc) {};
    \node[state,  left of=f_3,node distance = 1.3cm] (cerc1) {};
    \node[state, double,  below left of=f1,node distance = 2.5cm] (a) {$a$};
    %transitions 0
    \draw (a) ++(-1cm,0cm) node {$a$}  edge[->] (a);
    \draw (b) ++(0cm,-1cm) node {$b$}  edge[->] (b);
    %transitions 1
    \path[->]
     (f_3) edge[->,  bend right,right] node {$g_2$} (g_2)
     (a) edge[->,below left,bend right] node {$g_4$} (g_4)
     (cerc) edge[->, left] node {$f_1$} (f1)
     (cerc1) edge[->, above] node {$f_3$} (f_3);
    %transition 2
    \draw (cerc) ++(0.5 cm,-0.5cm)edge node [above,pos=0.5] {} (cerc)  edge[dashed]node[left,above,pos=0.6]{} (g_2)  edge[dashed,bend left=10] node[left,above,pos=0.5]{}(a)edge[dashed,bend right=30] node[left,above,pos=0.5]{}(f1);
    \draw (cerc) ++(-0.5 cm,-0.5cm)edge node [above,pos=0.5] {} (cerc)  edge[dashed,bend left=70]node[left,above,pos=0.6]{} (g_2)  edge[dashed] node[left,above,pos=0.5]{}(a)
    edge[dashed,bend left=30] node[left,above,pos=0.5]{}(f1);
    \draw (cerc1) ++(-0.5 cm,-0.5cm)edge node [above,pos=0.5] {} (cerc1)  edge[dashed,bend right]node[left,above,pos=0.6]{} (g_4);
    \draw (cerc1) ++(0.5 cm,-0.5cm)edge node [above,pos=0.5] {} (cerc1)  edge[dashed,bend left=15]node[left,above,pos=0.6]{} (b);
    \end{tikzpicture}}
    \caption{The compressed automata of the expression $(f_1(a,a)+g_2(b))^{*_a}\cdot_b f_3(g_4(a),b)$.}
    \label{linear_compressed_automata}
\end{figure}
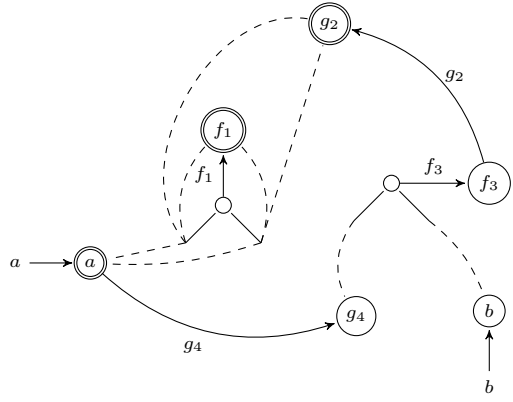
  As a direct consequence of Definition~\ref{def glu comp} and of Equation~\eqref{eq:extdeltaEnsComp},
  \begin{lemma}\label{lem comp rest trans state}
    Let $E$ be a linear expression over a ranked alphabet $\Sigma$.
    Let $\mathcal{C}(E) =(\Sigma,Q,Q_F,\delta)$.
    Then, for any states $(q_1,\ldots,q_n)$ in $Q^n$, for any symbol $f$ in $\Sigma_k$,
    \begin{equation*}
      \delta(q_1,\ldots,q_n,f) = \{f\} \Leftrightarrow \forall i \leq n, (f,i) \in \mathrm{Father}(E,q_i).
    \end{equation*}
  \end{lemma}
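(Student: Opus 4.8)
The plan is to unfold the definition of the restricted transition function given in Equation~\eqref{eq:extdeltaEnsComp} and compare it directly with the single compressed transition provided by Definition~\ref{def glu comp}. First I would observe that, by Definition~\ref{def glu comp}, the compressed automaton $\mathcal{C}(E)$ has, for each symbol $f$ in $\Sigma_k$, exactly one compressed transition labelled by $f$, namely $(Q_1^f,\ldots,Q_k^f,f,\{f\})$ where $Q_i^f = \{g \mid (f,i)\in\mathrm{Father}(E,g)\}$; in particular the only state that can ever be produced by reading $f$ is $f$ itself. Hence $\delta(q_1,\ldots,q_n,f)$ is either $\emptyset$ or $\{f\}$, and it equals $\{f\}$ precisely when the (unique) $f$-labelled compressed transition is ``active'' on $(q_1,\ldots,q_n)$.

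Next I would instantiate Equation~\eqref{eq:extdeltaEnsComp} for this situation: the union ranges over compressed transitions $(Q_1,\ldots,Q_k,f,q)\in\delta$ with $q_i\in Q_i$ for all $i\leq k$. Since there is only one such transition for the label $f$, the union is nonempty (and equal to $\{f\}$) if and only if $q_i \in Q_i^f$ for every $i\leq n$. Then I would simply rewrite the membership condition $q_i\in Q_i^f$ using the defining description of $Q_i^f$: $q_i\in Q_i^f \iff (f,i)\in\mathrm{Father}(E,q_i)$. Conjoining over $i\leq n$ yields exactly the right-hand side of the claimed equivalence, so $\delta(q_1,\ldots,q_n,f)=\{f\}$ iff $\forall i\leq n,\ (f,i)\in\mathrm{Father}(E,q_i)$.

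This argument is essentially a chain of definitional rewrites, so there is no real obstacle; the only point requiring a little care is the uniqueness of the $f$-labelled compressed transition in $\mathcal{C}(E)$, which is what collapses the union in Equation~\eqref{eq:extdeltaEnsComp} to a single conjunctive condition. I would state that uniqueness explicitly at the start of the proof and then let the rest follow by substitution.
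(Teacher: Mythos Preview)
Your proposal is correct and is exactly the argument the paper has in mind: the paper simply states that the lemma is ``a direct consequence of Definition~\ref{def glu comp} and of Equation~\eqref{eq:extdeltaEnsComp}'', and your write-up just spells out that unfolding, including the key observation that Definition~\ref{def glu comp} yields a unique $f$-labelled compressed transition so that the union in Equation~\eqref{eq:extdeltaEnsComp} collapses to a single membership test.
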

  Consequently, considering Definition~\ref{def glush bu}, Lemma~\ref{lem comp rest trans state} and Equation~\eqref{eq:lienExtsTransComp},
  \begin{proposition}\label{prop mem chemin compress glu}
    Let $E$ be a linear expression over a ranked alphabet $\Sigma$.
    Let $\mathcal{P}_E=(\_,\_,\_,\delta)$ and $\mathcal{C}(E) =(\_,\_,\_,\delta')$.
    For any tree $t$ in $T(\Sigma)$,
    \begin{equation*}
      \Delta(t) = \Delta'(t).
    \end{equation*}
  \end{proposition}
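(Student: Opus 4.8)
The plan is to prove $\Delta(t) = \Delta'(t)$ by structural induction on $t$, where $\Delta$ is the tree-extension of the transition function of the (uncompressed) Bottom-Up Position automaton $\mathcal{P}_E$ and $\Delta'$ is the tree-extension of the transition function of the compressed automaton $\mathcal{C}(E)$. The base and inductive steps are handled uniformly by the defining equation $\Delta(f(t_1,\ldots,t_n)) = \delta(\Delta(t_1),\ldots,\Delta(t_n),f)$, together with its compressed analogue, Equation~\eqref{eq:lienExtsTransComp}, namely $\Delta'(f(t_1,\ldots,t_n)) = \bigcup_{(q_1,\ldots,q_n)\in\Delta'(t_1)\times\cdots\times\Delta'(t_n)} \delta'(q_1,\ldots,q_n,f)$.

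First I would fix a tree $t = f(t_1,\ldots,t_n)$ and assume the induction hypothesis $\Delta(t_i) = \Delta'(t_i)$ for every $i\leq n$; write $Q_i$ for this common set. Then I would compute both sides. For $\mathcal{P}_E$, by the linear extension~\eqref{eq:extDeltaEns} of $\delta$ to sets of states, $\Delta(t) = \delta(Q_1,\ldots,Q_n,f) = \bigcup_{(q_1,\ldots,q_n)\in Q_1\times\cdots\times Q_n} \delta(q_1,\ldots,q_n,f)$, and Definition~\ref{def glush bu} tells us this union collects $\{f\}$ exactly for those tuples $(q_1,\ldots,q_n)$ satisfying $(f,i)\in\mathrm{Father}(E,q_i)$ for all $i\leq n$ (and is empty otherwise). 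For $\mathcal{C}(E)$, Equation~\eqref{eq:lienExtsTransComp} gives $\Delta'(t) = \bigcup_{(q_1,\ldots,q_n)\in Q_1\times\cdots\times Q_n} \delta'(q_1,\ldots,q_n,f)$, and by Lemma~\ref{lem comp rest trans state} each term $\delta'(q_1,\ldots,q_n,f)$ equals $\{f\}$ precisely when $(f,i)\in\mathrm{Father}(E,q_i)$ for all $i\leq n$. Hence the two unions range over the same family of tuples and contribute the same value $\{f\}$ on the matching tuples, so $\Delta(t) = \Delta'(t)$.

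The argument is essentially a one-line matching of the two characterizations, so there is no serious obstacle; the only care needed is to make sure the inductive step covers leaves $f\in\Sigma_0$ as the $n=0$ case (both unions then reduce to the single empty tuple, and the condition "$\forall i\leq 0,\ldots$" is vacuously true, giving $\Delta(f)=\Delta'(f)=\{f\}$, which matches the entries $(a,a)$, $(b,b)$ and so on appearing in the examples). The mild subtlety is purely bookkeeping: one must phrase the comparison so that when \emph{no} tuple in $Q_1\times\cdots\times Q_n$ satisfies the $\mathrm{Father}$ condition, both $\Delta(t)$ and $\Delta'(t)$ are the empty set; this follows because, by Definition~\ref{def glush bu} and Lemma~\ref{lem comp rest trans state} respectively, $\delta(q_1,\ldots,q_n,f)$ and $\delta'(q_1,\ldots,q_n,f)$ are empty for every non-qualifying tuple. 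Thus the proposition follows directly, as announced, from Definition~\ref{def glush bu}, Lemma~\ref{lem comp rest trans state} and Equation~\eqref{eq:lienExtsTransComp}.
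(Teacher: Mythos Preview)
Your proof is correct and matches the paper's approach exactly: the paper simply states that the proposition follows from Definition~\ref{def glush bu}, Lemma~\ref{lem comp rest trans state} and Equation~\eqref{eq:lienExtsTransComp}, and your structural induction is precisely the unfolding of that claim. Your treatment of the $n=0$ base case and the empty-union case is careful and accurate.
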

  Since the Bottom-Up Position automaton of a linear expression $E$ and its compressed version have the same states and the same final states,
  \begin{corollary}
    The Glushkov automaton of an expression and its compact version recognize the same language.
    % Let $E$ be a linear expression over a ranked alphabet $\Sigma$.
    % Then
    % \begin{equation*}
    %   L(\mathcal{P}_E) = L(\mathcal{C}(E)).
    % \end{equation*}
  \end{corollary}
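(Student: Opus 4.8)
The plan is to derive the statement in two stages: first for linear expressions, then for arbitrary ones via the delinearization morphism.

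For a linear expression $E$, I would argue directly. Write $\mathcal{P}_E=(\Sigma,Q,Q_F,\delta)$ and $\mathcal{C}(E)=(\Sigma,Q,Q_F,\delta')$; by Definition~\ref{def glush bu} and Definition~\ref{def glu comp} both automata have state set $\mathrm{Pos}(E)$ and final set $\mathrm{Root}(E)$, so they differ only in their transitions. Proposition~\ref{prop mem chemin compress glu} asserts that the induced maps on trees coincide, that is, $\Delta(t)=\Delta'(t)$ for every $t\in T(\Sigma)$. Hence $\Delta(t)\cap Q_F\neq\emptyset$ if and only if $\Delta'(t)\cap Q_F\neq\emptyset$, which is exactly $L(\mathcal{P}_E)=L(\mathcal{C}(E))$ by the definitions of the recognized languages.

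For an arbitrary expression $E$, recall that its (uncompressed) Bottom-Up Position automaton is defined as $\mathrm{h}(\mathcal{P}_{\overline{E}})$, and accordingly its compressed version is taken to be $\mathrm{h}(\mathcal{C}(\overline{E}))$, where $\overline{E}$ is the linearized form and $\mathrm{h}$ the delinearization morphism. Since $\overline{E}$ is linear, the first stage gives $L(\mathcal{P}_{\overline{E}})=L(\mathcal{C}(\overline{E}))$. Applying $\mathrm{h}$ to both sides and invoking Equation~\eqref{eq:lienMorpAutLang} on the left and Equation~\eqref{eq:lien morph lang compress} on the right yields
\[
  L(\mathrm{h}(\mathcal{P}_{\overline{E}})) = \mathrm{h}(L(\mathcal{P}_{\overline{E}})) = \mathrm{h}(L(\mathcal{C}(\overline{E}))) = L(\mathrm{h}(\mathcal{C}(\overline{E}))),
\]
which is the claim.

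I do not expect a genuine obstacle here: the corollary is essentially a bookkeeping consequence of Proposition~\ref{prop mem chemin compress glu} together with the two ``image by a morphism'' statements for automata and for compressed automata. The only point deserving care is to state explicitly that the compressed Position automaton of a non-linear expression is, by convention, the $\mathrm{h}$-image of the compressed automaton of the linearized expression (mirroring the uncompressed construction), so that Equation~\eqref{eq:lien morph lang compress} applies verbatim; once this convention is fixed, nothing further is needed.
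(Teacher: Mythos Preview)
Your proposal is correct and essentially matches the paper's approach. The paper's own justification for this corollary is the single sentence preceding it---same states, same final states, and Proposition~\ref{prop mem chemin compress glu} gives $\Delta(t)=\Delta'(t)$---which is exactly your first stage. The paper in fact argues only the linear case here (despite the phrasing ``an expression''), deferring the non-linear extension via $\mathrm{h}$ to the paragraph and proposition that immediately follow the corollary; your second stage correctly anticipates that step using the same two morphism equations~\eqref{eq:lienMorpAutLang} and~\eqref{eq:lien morph lang compress}.
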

  The compressed Bottom-Up Position automaton of a (not necessarily linear) expression $E$ can be obtained by first computing the compressed Bottom-Up Position automaton of its linearized expression $\overline{E}$ and then by applying the alphabetical morphism $\mathrm{h}$.
  Therefore, considering Equation~\eqref{eq:lien morph lang compress},
  \begin{proposition}
    The compressed Bottom-Up Position automaton of a regular expression $E$ recognizes $L(E)$.
  \end{proposition}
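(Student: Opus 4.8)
The plan is to reduce the general statement to the linear case, which has already been settled, by means of the delinearization morphism $\mathrm{h}$. Recall that by construction the compressed Bottom-Up Position automaton of $E$ is $\mathrm{h}(\mathcal{C}(\overline{E}))$, where $\overline{E}$ is the linearization of $E$ and $\mathrm{h}$ is the alphabetical morphism erasing the position indices. So the whole argument is the composition of two facts: that passing to the image under $\mathrm{h}$ transforms the recognized language by $\mathrm{h}$, and that the linearized automaton recognizes $L(\overline{E})$.

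First I would apply Equation~\eqref{eq:lien morph lang compress} with $\phi=\mathrm{h}$ and $A=\mathcal{C}(\overline{E})$, which gives
\begin{equation*}
  L\bigl(\mathrm{h}(\mathcal{C}(\overline{E}))\bigr) = \mathrm{h}\bigl(L(\mathcal{C}(\overline{E}))\bigr).
\end{equation*}
Since $\overline{E}$ is linear, the corollary asserting that the Bottom-Up Position automaton and its compact version recognize the same language, combined with the proposition that the Bottom-Up Position automaton of a linear expression recognizes its language, yields $L(\mathcal{C}(\overline{E})) = L(\overline{E})$. Hence the language recognized by the compressed Bottom-Up Position automaton of $E$ equals $\mathrm{h}(L(\overline{E}))$.

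It then remains to identify $\mathrm{h}(L(\overline{E}))$ with $L(E)$, and I expect this to be the only genuine (though routine) work. It is proved by induction on the structure of $E$: since $\mathrm{h}$ is an alphabetical morphism it commutes with the tree constructors $f(\cdot,\ldots,\cdot)$ and with union, and since it fixes every $0$-ary symbol — in particular the substitution markers, which linearization never re-indexes — it also commutes with the substitution product $\cdot_c$ and therefore with the map $L\mapsto L^{*_c}$; matching the inductive clauses defining $L(\overline{E})$ against those defining the delinearization on languages, every case collapses to the induction hypothesis. This is exactly the fact already underlying the corresponding statement for the (uncompressed) Bottom-Up Position automaton. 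Chaining the three equalities then yields $L(\mathcal{C}(E)) = \mathrm{h}(L(\mathcal{C}(\overline{E}))) = \mathrm{h}(L(\overline{E})) = L(E)$, which closes the proof.
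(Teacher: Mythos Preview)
Your proposal is correct and follows exactly the paper's approach: define the compressed automaton of $E$ as $\mathrm{h}(\mathcal{C}(\overline{E}))$, invoke Equation~\eqref{eq:lien morph lang compress}, and use the already established linear case. The only difference is that you spell out the identity $\mathrm{h}(L(\overline{E})) = L(E)$, which the paper leaves implicit (it is also silently used in the analogous uncompressed result).
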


\section{The Father Automaton}\label{sec:father}

  In this section, we define the Father automaton associated with an expression that is an extension of the classical follow (word) automaton~\cite{IY03} that have been already extended in the case of (Top-Down) tree automata~\cite{MOZ17}.

  We embed the computation of the function $\mathrm{Root}$ in the function $\mathrm{Father}$ by adding a unary symbol $\$$ that is not in $\Sigma$ at the top of the syntactic tree of an expression.
  Indeed,
  \begin{equation}
    f \in \mathrm{Root}(E) \Leftrightarrow (\$, 1) \in \mathrm{Father}(\$(E),f).
  \end{equation}
  Equivalenty,
  \begin{lemma}\label{lem eq fin state dollar father}
    Let $f$ be a state of the Position automaton of a linear expression $E$.
    The two following conditions hold:
    \begin{enumerate}
      \item $f$ is a final state,
      \item $(\$, 1)$ is in $\mathrm{Father}(\$(E),f)$.
    \end{enumerate}
  \end{lemma}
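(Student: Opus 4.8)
The plan is to collapse both conditions to the single membership statement $f \in \mathrm{Root}(E)$. First I would recall that by Definition~\ref{def glush bu} the set of final states of the Bottom-Up Position automaton $\mathcal{P}_E$ is precisely $\mathrm{Root}(E)$; hence condition \textbf{(1)} is, by definition, equivalent to $f \in \mathrm{Root}(E)$. This takes care of one side with no work.

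Next I would handle condition \textbf{(2)} by instantiating the symbol case of Lemma~\ref{lem father ind} with the fresh unary symbol $\$$ and the single subexpression $E$, which yields
\begin{equation*}
  \mathrm{Father}(\$(E),f) = \mathrm{Father}(E,f) \cup \{(\$,1) \mid f \in \mathrm{Root}(E)\}.
\end{equation*}
Now $\mathrm{Father}(E,f)$ is a subset of $\Sigma \times \mathbb{N}$ by Equation~\eqref{eq def father}, whereas $\$ \notin \Sigma$; therefore the pair $(\$,1)$ can never occur in $\mathrm{Father}(E,f)$, and it belongs to $\mathrm{Father}(\$(E),f)$ exactly when the guarded set $\{(\$,1) \mid f \in \mathrm{Root}(E)\}$ is nonempty, i.e.\ exactly when $f \in \mathrm{Root}(E)$. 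So condition \textbf{(2)} is equivalent to $f \in \mathrm{Root}(E)$ as well, and chaining the two equivalences gives the claim. (This is essentially the content of the displayed equation just preceding the lemma, re-read through Definition~\ref{def glush bu}.)

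The only point worth a line of justification — and not really an obstacle — is that Lemma~\ref{lem father ind} applies to $\$(E)$ at all: one needs $\$(E)$ to be a linear expression over $\Sigma \cup \{\$\}$, which holds because $\$$ occurs exactly once by construction and $E$ is assumed linear. Beyond that, there is nothing delicate here; the statement is an immediate unfolding of the inductive formula for $\mathrm{Father}$ together with the freshness of the adjoined top symbol $\$$.
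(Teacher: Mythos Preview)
Your proposal is correct and matches the paper's approach exactly: the paper states the displayed equivalence $f\in\mathrm{Root}(E)\Leftrightarrow(\$,1)\in\mathrm{Father}(\$(E),f)$ and then introduces the lemma with the word ``Equivalently,'' giving no further proof. Your argument simply unpacks that one-word justification, reducing both conditions to $f\in\mathrm{Root}(E)$ via Definition~\ref{def glush bu} and the symbol case of Lemma~\ref{lem father ind}; there is nothing to add.
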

  With this notation, the Father automaton is defined as follows.
  \begin{definition}\label{def father aut}
    The \emph{Father Automaton} of a linear expression $E$ over an alphabet $\Sigma$ is the automaton $\mathcal{F}_E=(\Sigma,Q,F,\delta)$ defined by
    \begin{gather*}
      \begin{aligned}
        Q &= \{\mathrm{Father}(\$(E),f) \mid f\in\Sigma\}, &
        F &= \{q\in Q \mid \$ \in q\},
      \end{aligned}\\
      ((\mathrm{Father}(\$(E),f_1),\ldots,\mathrm{Father}(\$(E),f_n)),g,\mathrm{Father}(\$(E),g)) \in \delta  \Leftrightarrow \forall i \leq n,      (g,i)\in\mathrm{Father}(E,f_i).
    \end{gather*}
  \end{definition}
  Notice that due to the linearity of $E$, $\mathcal{P}_{E}$ is deterministic.

  In the word case, it has been shown that the Follow automaton is a quotient of the Position automaton.
  Let us proceed in the same way in order to show, using Proposition~\ref{prop morph pres lang}, that the Father automaton of an expression $E$ recognizes $L(E)$.
  Consequently, let us first extend the notion of congruence for tree automata.

  Let $A=(\Sigma, Q, F, \delta)$ be a deterministic tree automaton and $\sim$ be an equivalence relation over $Q$ such that for any two equivalent states $p$ and $p'$,
  \begin{equation*}
    p\in F \Leftrightarrow p'\in F.
  \end{equation*}
  Given an equivalence relation $\sim$ over a set $S$, we denote by $S_\sim$ the set of equivalence classes of $S$ and by $[p]_\sim$ (or $[p]$ when there is no ambiguity) the equivalence classe of an element $p$ in $P$.
  By a notation abuse, we extend $\sim$ to the subsets of size $0$ or $1$ as follows:
  \begin{gather*}
    \begin{aligned}
      \emptyset &\sim \emptyset, & \forall s\in S, \emptyset \not\sim \{s\},\\
    \end{aligned}\\
    \forall s,s'\in S, \{s\} \sim \{s'\} \Leftrightarrow s \sim s'.
  \end{gather*}
  The relation $\sim$ is a \emph{Bottom-Up congruence} for $\delta$ if and only if for any two states $p$ and $p'$ in $Q$, the two following conditions are equivalent:
    \begin{enumerate}
      \item $p\sim p'$,
      \item for any symbol $f$ in $\Sigma_m$, for any integer $n\leq m$, for any $m-1$ states $q_1$, $\ldots$, $q_{n-1}$, $q_{n+1}$, $\ldots$, $q_m$ in $Q$,
        \begin{equation*}\label{eq def bot up cong}
          \delta((q_1,\ldots,q_{n-1},p,q_{n+1},\ldots,q_m),f) \sim \delta((q_1,\ldots,q_{n-1},p',q_{n+1},\ldots,q_m),f).
        \end{equation*}
      \end{enumerate}
  Two Bottom-Up congruent states can be said \emph{interchangeable} in the litterature~\cite{AT90}.
  The \emph{quotient automaton} of $A$ w.r.t. $\sim$ is the automaton $A_\sim=(\Sigma,Q_\sim,F_\sim,\delta')$ with
  \begin{equation}\label{eq def quot}
    \delta'(([q_1],\ldots,[q_m]),f) = \{\phi(q)\mid q\in \delta((q_1,\ldots,q_m),f)\}.
  \end{equation}
  Since $A_\sim$ can be computed from a canonical morphism associated with $\phi$, and as a direct corollary of Proposition~\ref{prop morph pres lang},
  \begin{proposition}\label{prop bot up cong pres lang}
    Quotienting by a Bottom-Up congruence preserves the recognized language.
  \end{proposition}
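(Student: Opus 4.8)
The plan is to exhibit the quotient automaton $A_\sim$ as the image of $A$ under a canonical automaton morphism, and then invoke Proposition~\ref{prop morph pres lang}. First I would define the map $\phi\colon Q \to Q_\sim$ by $\phi(q) = [q]_\sim$, extended to be the identity on $\Sigma$ (and sending $F$ onto $F_\sim$, which is well defined precisely because $\sim$ refines the partition $\{F, Q\setminus F\}$ by the hypothesis on $\sim$). To check that $\phi$ is an automaton morphism in the sense of Equation~\eqref{eq def morphism}, one must verify
\begin{equation*}
  \delta'((\phi(q_1),\ldots,\phi(q_n)),\phi(f)) = \{\phi(q) \mid q\in \delta((q_1,\ldots,q_n),f)\},
\end{equation*}
but since $\phi(f) = f$ and $\phi(q_i) = [q_i]$, this is exactly the defining Equation~\eqref{eq def quot} of $\delta'$. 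So $\phi$ is a morphism essentially by construction, provided $\delta'$ is well defined.

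The one genuine obligation — and the step I expect to be the main obstacle — is showing that Equation~\eqref{eq def quot} actually determines a well-defined function $\delta'$, i.e. that the right-hand side depends only on the classes $[q_1],\ldots,[q_m]$ and not on the chosen representatives $q_1,\ldots,q_m$. This is where the Bottom-Up congruence property is used. The argument is a routine "change one coordinate at a time" induction: if $q_i \sim q_i'$ for each $i$, then by $m$ successive applications of condition~(2) in the definition of Bottom-Up congruence (replacing $q_1$ by $q_1'$, then $q_2$ by $q_2'$, and so on, keeping the other coordinates fixed at whichever representative is current),
\begin{equation*}
  \delta((q_1,\ldots,q_m),f) \sim \delta((q_1',q_2,\ldots,q_m),f) \sim \cdots \sim \delta((q_1',\ldots,q_m'),f),
\end{equation*}
and transitivity of $\sim$ (extended to singletons and $\emptyset$ by the notational convention above, which is legitimate because $A$ is deterministic so each $\delta(\cdots)$ has size $0$ or $1$) gives $\delta((q_1,\ldots,q_m),f) \sim \delta((q_1',\ldots,q_m'),f)$. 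Hence $\{\phi(q)\mid q\in\delta((q_1,\ldots,q_m),f)\} = \{\phi(q)\mid q\in\delta((q_1',\ldots,q_m'),f)\}$, so $\delta'$ is well defined.

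With $\phi$ established as an automaton morphism satisfying $\phi(f) = f$ for every symbol $f$, and with $A_\sim = \phi(A)$ by the definitions of $Q_\sim$, $F_\sim$ and $\delta'$, Proposition~\ref{prop morph pres lang} applies directly and yields $L(A) = L(A_\sim)$, which is the claim. The only subtlety worth flagging explicitly in the write-up is that determinism of $A$ is what makes the extension of $\sim$ to sets (and the identification of $\delta(\cdots)$ with an element or $\emptyset$) coherent, so the hypothesis "$A$ is deterministic" in the statement preceding the proposition is genuinely used rather than decorative.
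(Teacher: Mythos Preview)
Your proposal is correct and follows exactly the approach the paper takes: the paper's proof is a one-line remark that $A_\sim$ arises from the canonical morphism $\phi$ and then invokes Proposition~\ref{prop morph pres lang}. You have simply filled in the details the paper leaves implicit, most notably the well-definedness of $\delta'$ via the coordinate-by-coordinate use of the congruence condition and the role of determinism in making the extension of $\sim$ to $\emptyset$ and singletons coherent.
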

  Let us now show how to obtain the Father automaton by quotienting the Position automaton w.r.t. the following congruence.
  \begin{definition}
    The \emph{Father congruence} associated with a linear expression $E$ over an alphabet $\Sigma$ is the congruence $\sim$ defined by
    \begin{equation*}
      p \sim p' \Leftrightarrow \mathrm{Father}(\$(E),p) = \mathrm{Father}(\$(E),p').
    \end{equation*}
  \end{definition}
  Equivalently, the Father congruence is the \emph{kernel} of the function sending a symbol $p$ to $\mathrm{Father}(\$(E),p)$.
  \begin{proposition}\label{prop father cong est cong}
    The Father congruence of a linear expression $E$ is a Bottom-Up congruence for the transition function of the Position automaton of $E$.
  \end{proposition}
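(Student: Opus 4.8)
The plan is to check the two requirements hidden in the definition of a Bottom-Up congruence for the transition function $\delta$ of the Position automaton: first, that $\sim$ never separates two congruent states into the final/non-final partition, and second, that the interchangeability condition~(2) of the definition is equivalent to $p\sim p'$. The book-keeping tool I would set up at the start is the identity obtained from the first case of Lemma~\ref{lem father ind} applied to the fresh unary symbol $\$$, namely $\mathrm{Father}(\$(E),f)=\mathrm{Father}(E,f)\cup(\{(\$,1)\}\mid f\in\mathrm{Root}(E))$. Since $\$$ does not occur in $E$, this splits $\mathrm{Father}(\$(E),f)$ into its ``$\Sigma$-part'' $\mathrm{Father}(E,f)$ and its ``$\$$-part'', which records exactly whether $f\in\mathrm{Root}(E)$. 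I would also use repeatedly that, by linearity, $\mathcal{P}_{\$(E)}$ is deterministic and that each of its transitions labelled by a symbol $g$ goes to the state $g$; hence $\delta((q_1,\ldots,q_m),g)$ is always $\emptyset$ or $\{g\}$, and it equals $\{g\}$ precisely when $(g,i)\in\mathrm{Father}(\$(E),q_i)$ for all $i\le m$ (Definition~\ref{def glush bu}). The finality requirement is then immediate: $p\sim p'$ means $\mathrm{Father}(\$(E),p)=\mathrm{Father}(\$(E),p')$, so $(\$,1)$ lies in one iff it lies in the other, and Lemma~\ref{lem eq fin state dollar father} turns this into $p\in\mathrm{Root}(E)\Leftrightarrow p'\in\mathrm{Root}(E)$.

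For $(1)\Rightarrow(2)$ I would show the stronger fact that $p\sim p'$ makes the two transition outputs literally equal, not merely $\sim$-equivalent. Intersecting the defining equality with $\Sigma\times\mathbb{N}$ gives $\mathrm{Father}(E,p)=\mathrm{Father}(E,p')$, and separately $p\in\mathrm{Root}(E)\Leftrightarrow p'\in\mathrm{Root}(E)$. For a symbol $g\in\Sigma_m$, a position $n\le m$, and argument states $q_j$ ($j\neq n$), the value $\delta((q_1,\ldots,q_{n-1},p,q_{n+1},\ldots,q_m),g)$ is $\{g\}$ exactly when $(g,j)\in\mathrm{Father}(E,q_j)$ for all $j\neq n$ and $(g,n)\in\mathrm{Father}(E,p)$; replacing $p$ by $p'$ only touches the last clause, which is equivalent, so the outputs coincide; the same remark for the $\$$-labelled (root-testing) transition uses $p\in\mathrm{Root}(E)\Leftrightarrow p'\in\mathrm{Root}(E)$. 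Equal sets are in particular $\sim$-equivalent, so (2) holds.

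The substantive direction is $(2)\Rightarrow(1)$: the idea is to read off, from an arbitrary member of $\mathrm{Father}(\$(E),p)$, a concrete transition of $\mathcal{P}_{\$(E)}$ witnessing it, and then transport that transition to $p'$ by~(2). Assume (2) and take $(g,i)\in\mathrm{Father}(\$(E),p)$; I claim $(g,i)\in\mathrm{Father}(\$(E),p')$, the reverse inclusion being symmetric. If $g=\$$ (so $i=1$ and $p\in\mathrm{Root}(E)$), then $((p),\$,\$)\in\delta$, hence $\delta((p),\$)=\{\$\}$; instantiating (2) at the symbol $\$$ with no other argument states gives $\{\$\}\sim\delta((p'),\$)$, and since the right-hand side is $\emptyset$ or $\{\$\}$ while $\emptyset\not\sim\{\$\}$, it must be $\{\$\}$, i.e.\ $p'\in\mathrm{Root}(E)$. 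If $g\neq\$$, then $(g,i)\in\mathrm{Father}(E,p)$, so by the definition of $\mathrm{Father}$ (Equations~\eqref{def fath} and~\eqref{eq def father}) there is $t\in L(E)$ with a subtree $g(s_1,\ldots,s_m)\prec t$ such that $\mathrm{root}(s_i)=p$; putting $q_j=\mathrm{root}(s_j)$ for $j\neq i$, we get $(g,j)\in\mathrm{Father}(E,q_j)$ for every $j\neq i$, so the transition $((q_1,\ldots,q_{i-1},p,q_{i+1},\ldots,q_m),g,g)$ lies in $\delta$ and its output is $\{g\}$. Instantiating (2) at $g$, position $i$, and the filler states $q_j$ yields $\{g\}\sim\delta((q_1,\ldots,q_{i-1},p',q_{i+1},\ldots,q_m),g)$, which again forces that output to be $\{g\}$, so the corresponding transition is in $\delta$ and therefore $(g,i)\in\mathrm{Father}(E,p')\subseteq\mathrm{Father}(\$(E),p')$. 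Combining the two inclusions gives $\mathrm{Father}(\$(E),p)=\mathrm{Father}(\$(E),p')$, that is, $p\sim p'$.

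The only genuinely delicate point I anticipate is the treatment of $\$$: condition~(2) quantifies over the symbols of the alphabet, and it is exactly the artificial root-testing transitions carried by $\$$ that let it detect membership in $\mathrm{Root}(E)$ — without them, two positions with the same $\mathrm{Father}(E,\cdot)$ but opposite root status would be interchangeable and yet not congruent. So the essential modelling decision is to understand ``the Position automaton of $E$'' as carrying these $\$$-transitions, equivalently to treat $\mathrm{Root}$ as embedded into $\mathrm{Father}$ exactly as announced before Lemma~\ref{lem eq fin state dollar father}; once this is granted, the rest is the elementary observation that $\mathcal{P}_{\$(E)}$ is deterministic with one-element transition outputs and that any nonempty output on symbol $g$ is forced to be $\{g\}$. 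Finally, combined with Proposition~\ref{prop bot up cong pres lang}, this proposition yields that the Father automaton of $E$ recognizes $L(E)$.
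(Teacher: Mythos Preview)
Your argument follows the same two-step outline as the paper's --- dispatch finality via Lemma~\ref{lem eq fin state dollar father}, then show that $p\sim p'$ is equivalent to the interchangeability condition --- but where the paper's proof simply asserts, ``from the construction of the Position automaton (Definition~\ref{def glush bu}),'' that $p\sim p'$ is equivalent to \emph{equality} of the transition outputs and then appeals to reflexivity of $\sim$, you actually spell out both implications, and for $(2)\Rightarrow(1)$ you manufacture an explicit witnessing transition from a tree of $L(E)$ realizing a given element of $\mathrm{Father}$. Your flagged ``delicate point'' about $\$$ is genuine and in fact sharper than the paper's treatment: Definition~\ref{def glush bu} builds $\mathcal{P}_E$ over $\Sigma$ only, so condition~(2) there detects $\mathrm{Father}(E,\cdot)$ but not membership in $\mathrm{Root}(E)$; a position occurring syntactically in $E$ but in no tree of $L(E)$ (for instance $a$ in $f(a)\cdot_a b$, where $\mathrm{Father}(E,a)=\emptyset=\mathrm{Father}(E,f)$ yet $a\not\sim f$) breaks the reverse implication as literally stated. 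Your reading of the automaton as carrying the $\$$-transition repairs this, while the paper's proof glosses over it entirely. Since only the direction $(1)\Rightarrow(2)$ is needed downstream for Proposition~\ref{prop bot up cong pres lang} and Proposition~\ref{prop father quot pos}, the gap is harmless for the paper's purposes, but your proof is the more complete of the two.
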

  \begin{proof}
    First, let us notice that following Lemma~\ref{lem eq fin state dollar father}, two equivalent states have the same finality.
    Moreover, two states are equivalent if and only if they admit the same fathers.
    Consequently, from the construction of the Position automaton (Definition~\ref{def glush bu}), for any two states $p$ and $p'$, the two following conditions are equivalent:
      \begin{enumerate}
        \item $p\sim p'$,
        \item for any symbol $f$ in $\Sigma_m$, for any integer $n\leq m$, for any $m-1$ states $q_1$, $\ldots$, $q_{n-1}$, $q_{n+1}$, $\ldots$, $q_m$ in $Q$,
          \begin{equation*}
            \delta((q_1,\ldots,q_{n-1},p,q_{n+1},\ldots,q_m),f) = \delta((q_1,\ldots,q_{n-1},p',q_{n+1},\ldots,q_m),f).
          \end{equation*}
      \end{enumerate}
      Since $\sim$ is reflexive, it is a Bottom-Up congruence.
    \qed
  \end{proof}
  \begin{proposition}\label{prop father quot pos}
    The Father Automaton associated with a linear expression $E$ is isomorphic to the quotient of the Position automaton of $E$ w.r.t. the Father congruence.
  \end{proposition}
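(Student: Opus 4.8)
The plan is to exhibit an explicit isomorphism between the Father automaton $\mathcal{F}_E$ and the quotient $(\mathcal{P}_E)_\sim$ of the Position automaton by the Father congruence $\sim$, and to verify that it is an automaton morphism in the sense of Equation~\eqref{eq def morphism} with an inverse of the same kind. First I would recall that the states of $\mathcal{P}_E$ are the positions $\mathrm{Pos}(E) = \Sigma$ (restricted to symbols occurring in $E$), while by Definition~\ref{def father aut} the states of $\mathcal{F}_E$ are exactly the values $\mathrm{Father}(\$(E),f)$ for $f\in\Sigma$, and by the definition of the Father congruence these are precisely the $\sim$-classes seen through the kernel map $f\mapsto \mathrm{Father}(\$(E),f)$. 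So the natural candidate map is $\psi\colon [f]_\sim \mapsto \mathrm{Father}(\$(E),f)$, which is well defined and bijective by definition of $\sim$ as the kernel of $f\mapsto\mathrm{Father}(\$(E),f)$; its inverse $\psi'$ sends a state $\mathrm{Father}(\$(E),f)$ of $\mathcal{F}_E$ to the class $[f]_\sim$ (picking any representative, consistency being guaranteed again by the kernel property).

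Next I would check that $\psi$ respects finality and transitions. For finality: by Lemma~\ref{lem eq fin state dollar father}, $f$ is final in $\mathcal{P}_E$ iff $(\$,1)\in\mathrm{Father}(\$(E),f)$, i.e.\ iff $\$ \in \mathrm{Father}(\$(E),f) = \psi([f]_\sim)$, which is exactly the condition $\psi([f]_\sim)\in F$ in Definition~\ref{def father aut}; and the quotient's final states are $F_\sim = \{[f] \mid f\in F\}$ by Proposition~\ref{prop father cong est cong} (equivalent states share finality), so the finality sets correspond. For transitions: by Definition~\ref{def glush bu}, $((f_1,\ldots,f_n),g,g)\in\delta_{\mathcal{P}_E}$ iff $(g,i)\in\mathrm{Father}(E,f_i)$ for all $i\le n$; Equation~\eqref{eq def quot} then gives the quotient transition $\delta_\sim(([f_1],\ldots,[f_n]),g) = \{[g]\}$ under exactly that same condition $(g,i)\in\mathrm{Father}(E,f_i)$ for all $i$ (here one uses that $\mathrm{Father}(E,f_i)$ depends only on the class $[f_i]$ — this follows since $\sim$ refines equality of $\mathrm{Father}(\$(E),\cdot)$ and $\mathrm{Father}(E,\cdot)$ is recovered from $\mathrm{Father}(\$(E),\cdot)$ by discarding pairs involving $\$$). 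Applying $\psi$, this matches term-by-term the defining condition for the transitions of $\mathcal{F}_E$ in Definition~\ref{def father aut}. Hence $\psi$ and $\psi'$ are mutually inverse automaton morphisms, so $\mathcal{F}_E$ and $(\mathcal{P}_E)_\sim$ are isomorphic in the sense defined just before Section~\ref{sec:glushFun}.

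The one point requiring genuine care — the main (if modest) obstacle — is the well-definedness of the transition map on classes: we must argue that the condition ``$(g,i)\in\mathrm{Father}(E,f_i)$'' is invariant under replacing $f_i$ by a $\sim$-equivalent $f_i'$, and likewise that the target $[g]$ is determined by the source classes rather than by the chosen representatives. Both reduce to the observation that $\mathrm{Father}(E,f)$ is a function of $\mathrm{Father}(\$(E),f)$ — concretely, $\mathrm{Father}(\$(E),f) = \mathrm{Father}(E,f) \cup (\{(\$,1)\}\mid f\in\mathrm{Root}(E))$ by Equation~\eqref{eq father tree} applied to $\$(E)$, and since $\$\notin\Sigma$ the pair $(\$,1)$ can always be told apart from the pairs contributed by $\mathrm{Father}(E,f)$, so $\mathrm{Father}(E,f)$ is exactly $\mathrm{Father}(\$(E),f)$ with the pair $(\$,1)$ removed. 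Thus equal $\mathrm{Father}(\$(E),\cdot)$-values force equal $\mathrm{Father}(E,\cdot)$-values, and the whole verification goes through; since $\mathcal{P}_E$ is deterministic and $\sim$ is a Bottom-Up congruence by Proposition~\ref{prop father cong est cong}, the quotient construction of Equation~\eqref{eq def quot} is legitimate, completing the argument.
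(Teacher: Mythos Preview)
Your proposal is correct and follows essentially the same approach as the paper: both define the bijection $[f]_\sim \leftrightarrow \mathrm{Father}(\$(E),f)$, verify it is well defined via the kernel property, check that finality is preserved through Lemma~\ref{lem eq fin state dollar father}, and establish the transition correspondence by chaining Equation~\eqref{eq def quot}, Definition~\ref{def glush bu}, and Definition~\ref{def father aut}. Your additional care in arguing that $\mathrm{Father}(E,f)$ is recoverable from $\mathrm{Father}(\$(E),f)$ (hence class-invariant) is a point the paper leaves implicit, but the overall structure is the same.
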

  \begin{proof}
    Let us set
    \begin{align*}
      {\mathcal{P}_E} &= (\_,\_,\_,\delta), &
      {\mathcal{P}_E}_\sim &= (\_,\_,\_,\delta_\sim)&
      \mathcal{F}_E &= (\_,\_,\_,\delta').
    \end{align*}
    Let us consider the functions $\phi$ and $\phi'$ defined as follows
    \begin{align*}
      \phi([f]) &= \mathrm{Father}(\$(E),f), &
      \phi'(\mathrm{Father}(\$(E),f)) &= [f].
    \end{align*}
    Notice that since $f\sim f' \Leftrightarrow \mathrm{Father}(\$(E),f) = \mathrm{Father}(\$(E),f')$, the functions are both well-defined.
    Moreover, they are trivially the inverse of each other.
    Furthermore, since
    \begin{equation*}
      f\sim f' \Rightarrow ((\$,1) \in \mathrm{Father}(\$(E),f) \Leftrightarrow (\$,1) \in \mathrm{Father}(\$(E),f')),
    \end{equation*}
    $\phi$ preserves the finality of the states.
    Finally,
    \begin{align*}
      (([f_1],\ldots,[f_n]),g,[g]) \in \delta_\sim
      & \Leftrightarrow ((f_1,\ldots,f_n),g,g) \in \delta & \textbf{(Equation~\eqref{eq def quot})}\\
      & \Leftrightarrow \forall i\leq n, (g,i) \in \mathrm{Father}(E,f_i)  & \textbf{(Definition~\ref{def glush bu})}\\
      & \Leftrightarrow ((\mathrm{Father}(\$(E),f_1),\ldots,    \mathrm{Father}(\$(E),f_n)),g,\mathrm{Father}(\$(E),g))\in\delta' & \textbf{(Definition~\ref{def father aut})}.
    \end{align*}
    Hence $\phi$ and $\phi'$ are two inverse automata morphisms between ${\mathcal{P}_E}_\sim$ and $\mathcal{F}_E$.
    \qed
  \end{proof}
  As a direct consequence of Proposition~\ref{prop bot up cong pres lang}, Proposition~\ref{prop father cong est cong} and Proposition~\ref{prop father quot pos},
  \begin{corollary}
    The Father Automaton associated with a linear expression $E$ recognized $L(E)$.
  \end{corollary}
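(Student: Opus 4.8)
The plan is to chain together the three cited propositions with the earlier result that the Bottom-Up Position automaton of a linear expression recognizes $L(E)$. First I would invoke Proposition~\ref{prop father quot pos}: the Father automaton $\mathcal{F}_E$ is isomorphic to the quotient ${\mathcal{P}_E}_\sim$ of the Position automaton of $E$ by the Father congruence $\sim$. The two morphisms $\phi$ and $\phi'$ witnessing this isomorphism only rename states (sending $[f]_\sim$ to $\mathrm{Father}(\$(E),f)$ and back) and fix every symbol of $\Sigma$, so Proposition~\ref{prop morph pres lang} applies to each of them; hence $L(\mathcal{F}_E) = L({\mathcal{P}_E}_\sim)$.

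Next I would use Proposition~\ref{prop father cong est cong}, which guarantees that $\sim$ is a Bottom-Up congruence for the transition function of $\mathcal{P}_E$. Proposition~\ref{prop bot up cong pres lang} then gives $L({\mathcal{P}_E}_\sim) = L(\mathcal{P}_E)$. Finally, the proposition asserting that the Bottom-Up Position automaton of a linear expression $E$ recognizes $L(E)$ yields $L(\mathcal{P}_E) = L(E)$. Composing these three equalities establishes $L(\mathcal{F}_E) = L(E)$, which is the claim.

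There is essentially no hard step here: the corollary is a purely formal composition of already-established facts, exactly as announced in the text. The only point requiring a moment's care is confirming that the isomorphism of Proposition~\ref{prop father quot pos} transports languages, i.e.\ that Proposition~\ref{prop morph pres lang} is legitimately applicable to $\phi$ and $\phi'$; this is immediate since neither morphism modifies the alphabet, so both satisfy the hypothesis $\phi(f)=f$ for every $f\in\Sigma$. Everything else is bookkeeping.
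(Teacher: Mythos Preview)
Your argument is correct and follows precisely the approach announced in the paper: the corollary is obtained by chaining Proposition~\ref{prop father quot pos}, Proposition~\ref{prop father cong est cong}, and Proposition~\ref{prop bot up cong pres lang}, together with the already established fact that $\mathcal{P}_E$ recognizes $L(E)$. Your additional remark that Proposition~\ref{prop morph pres lang} is needed to pass from the isomorphism to equality of languages is a legitimate refinement of what the paper leaves implicit.
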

  Applying the delinearization morphism $\mathrm{h}$ from $\mathcal{F}_{\overline{E}}$ produces the Father automaton of any expression $E$.
  Finally,
  \begin{corollary}
    The Father Automaton associated with an expression $E$ recognizes $L(E)$.
  \end{corollary}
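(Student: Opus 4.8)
The plan is to reduce the general statement to the linear case already established in the previous corollary, exactly as was done earlier for the Bottom-Up Position automaton and for its compressed version. By definition (see the remark preceding the statement), the Father automaton of an arbitrary expression $E$ is $\mathcal{F}_E = \mathrm{h}(\mathcal{F}_{\overline{E}})$, where $\overline{E}$ is the linearized form of $E$ and $\mathrm{h}$ is the delinearization morphism. So the proof is a short chain of three equalities.

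First I would invoke the corollary for linear expressions: since $\overline{E}$ is linear, $\mathcal{F}_{\overline{E}}$ recognizes $L(\overline{E})$, i.e. $L(\mathcal{F}_{\overline{E}}) = L(\overline{E})$. Next, since $\mathrm{h}$ is an alphabetical morphism, Equation~\eqref{eq:lienMorpAutLang} applied to the automaton $\mathcal{F}_{\overline{E}}$ gives
\begin{equation*}
  L(\mathcal{F}_E) = L(\mathrm{h}(\mathcal{F}_{\overline{E}})) = \mathrm{h}(L(\mathcal{F}_{\overline{E}})) = \mathrm{h}(L(\overline{E})).
\end{equation*}
Finally I would use the identity $\mathrm{h}(L(\overline{E})) = L(E)$, so that $L(\mathcal{F}_E) = L(E)$ as required.

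The only ingredient that is not an immediate citation of an earlier result is this last identity $\mathrm{h}(L(\overline{E})) = L(E)$, so I expect it to be the (mild) main obstacle; it is however folklore for position-based constructions and follows by a routine induction on the structure of $E$. Each expression constructor commutes with delinearization and with the passage to languages: $\mathrm{h}$ fixes every symbol of rank $0$ (in particular the symbols $c$ governing the products $\cdot_c$ and the iterations $*_c$) and merely renames the indexed symbols of rank $\geq 1$ back to their originals, so $\mathrm{h}(L(f_j(\overline{E_1},\ldots,\overline{E_k}))) = L(f(E_1,\ldots,E_k))$, and the cases of $+$, $\cdot_c$ and $*_c$ follow from the induction hypothesis together with the fact that $\mathrm{h}$ distributes over $\cup$, over $\cdot_c$ and over the iterated substitution $*_c$. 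Note that one cannot shortcut this step through Proposition~\ref{prop morph pres lang}, since $\mathrm{h}$ does not fix the indexed symbols of $\overline{E}$; the route via Equation~\eqref{eq:lienMorpAutLang} plus the language identity above is the clean one. Chaining the three equalities completes the proof.
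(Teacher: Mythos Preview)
Your proposal is correct and follows essentially the same approach as the paper: the paper simply remarks that the Father automaton of an arbitrary $E$ is obtained as $\mathrm{h}(\mathcal{F}_{\overline{E}})$ and then states the corollary, implicitly relying on the linear case together with Equation~\eqref{eq:lienMorpAutLang} and the folklore identity $\mathrm{h}(L(\overline{E}))=L(E)$, exactly as you spell out. Your write-up is just more explicit than the paper's one-line justification.
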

  \begin{example}
    The Father automaton $(\mathrm{Pos}(\overline{E}),{\mathrm{Pos}(\overline{E})}_\sim,{\mathrm{Root}(\overline{E})}_\sim,\delta)$ of the expression $\overline{E}$ defined in Example~\ref{ex:calcul fonctions glushkov}
    is obtained by merging the states $f_1$ and $g_2$ of $\mathcal{P}_E$, \emph{i.e.}:
    \begin{gather*}
      \begin{aligned}
        \mathrm{Pos}(E) & = \{[a],[b],\{f_1,g_2\},[f_3],[g_4]\}, &
      \mathrm{Root}(\overline{E}) &= \{[a],[f_1]\},
      \end{aligned}\\
      \begin{aligned}
        \delta &= \{(a,[a]), (b,[b]),  (([a],[a]),f_1,[f_1]), (([a],[f_1]),f_1,[f_1]),  (([f_1],a),f_1,[f_1]),\\
        & \qquad  (([f_1],[f_1]),f_1,[f_1]), ([f_3],g_2,[g_2]),(([b],[g_4]),f_3,[f_3]), ([a],g_4,[g_4])\}.
      \end{aligned}
    \end{gather*}
    %is given in Figure~\ref{fig:ex:botupPosAut}.
  \end{example}

  \section{The Compressed Father Automaton}\label{sec:father-compressed}
  Finally, let us show that similarly to the Position automaton, the Father automaton can be compressed too.
  \begin{definition}\label{def fat comp}
    The \emph{compressed Father automaton} $\mathcal{CF}(E)$ of a linear expression $E$ is the automaton $(\Sigma,\mathrm{Pos}(E),\mathrm{Root}(E),\delta)$ defined by
    \begin{equation*}
      \delta = \{(Q_1,\ldots,Q_k,f,\{\mathrm{Father}(\$(E),f)\}) \mid Q_i = \{\mathrm{Father}(\$(E),g) \mid (f,i) \in \mathrm{Father}(\$(E),g)\} \}.
    \end{equation*}
  \end{definition}
  In order to show that $\mathcal{CF}(E)$ recognizes $L(E)$, we can apply the same method as for the Father automaton.
  First, due to Equation~\ref{eq:extdeltaEnsComp}, the definition of a Bottom-Up congruence for $A$ is exactly the same (Equation~\eqref{eq def bot up cong}).

  The \emph{quotient} of a compressed automaton $A=(\Sigma,Q,F,\delta)$ w.r.t. a Bottom-Up congruence $\sim$ is the compressed automaton $A_\sim=(\Sigma,Q_\sim, F_\sim, \delta')$ where
  \begin{equation*}
    \delta'((Q_1,\ldots,Q_m),f) = \{\phi(q)\mid q\in \delta((q_1,\ldots,q_m),f) \wedge  \forall i\leq m, [q_i] \in Q_i\}.
  \end{equation*}
  Similarly to Lemma~\ref{lem comp rest trans state} and Proposition~\ref{prop mem chemin compress glu}, it can be shown that
  \begin{proposition}
    The compressed Father automaton is a quotient of the compressed Position automaton w.r.t. the Father congruence.
  \end{proposition}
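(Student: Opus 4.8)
The plan is to transport the proof of Proposition~\ref{prop father quot pos} into the setting of compressed automata. The one ingredient that must be re-established at the outset is that the Father congruence $\sim$ is a Bottom-Up congruence for the compressed Position automaton $\mathcal{C}(E)$ (so that the quotient, as defined just above, makes sense). This is immediate: by Equation~\eqref{eq:extdeltaEnsComp} and Proposition~\ref{prop mem chemin compress glu}, the restriction of the transition function of $\mathcal{C}(E)$ to tuples of single states coincides with the transition function of $\mathcal{P}_E$ (this is the content of Lemma~\ref{lem comp rest trans state}), hence the condition defining a Bottom-Up congruence, Equation~\eqref{eq def bot up cong}, reads exactly the same for $\mathcal{C}(E)$ as for $\mathcal{P}_E$, and Proposition~\ref{prop father cong est cong} applies verbatim. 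So $\mathcal{C}(E)_\sim$ is a well-defined compressed automaton.

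Next I would exhibit the isomorphism between $\mathcal{C}(E)_\sim$ and $\mathcal{CF}(E)$ using the very same maps as in Proposition~\ref{prop father quot pos}: set $\phi([f]) = \mathrm{Father}(\$(E),f)$ and $\phi'(\mathrm{Father}(\$(E),f)) = [f]$, and extend both to sets of states pointwise, $\phi(\mathcal{Q}) = \{\mathrm{Father}(\$(E),f) \mid [f]\in\mathcal{Q}\}$ and symmetrically for $\phi'$. As in that proof, $\phi$ and $\phi'$ are well defined precisely because $\sim$ is the kernel of $f\mapsto\mathrm{Father}(\$(E),f)$, they are mutually inverse, and they preserve finality by Lemma~\ref{lem eq fin state dollar father}; moreover they identify the state set $\mathrm{Pos}(E)_\sim$ of $\mathcal{C}(E)_\sim$ with the state set $\{\mathrm{Father}(\$(E),f)\mid f\in\Sigma\}$ of $\mathcal{CF}(E)$, and the final states with the final states.

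The core of the argument — and the step I expect to require the most care, since one must keep careful track of the two variants of the $\mathrm{Father}$ function, with and without the extra symbol $\$$ — is to check that $\phi$ sends the compressed transition set of $\mathcal{C}(E)_\sim$ onto that of $\mathcal{CF}(E)$. By Definition~\ref{def glu comp}, a compressed transition of $\mathcal{C}(E)$ has the form $(Q_1,\ldots,Q_k,f,\{f\})$ with $Q_i = \{g \mid (f,i)\in\mathrm{Father}(E,g)\}$; after quotienting, its target becomes $\{[f]\}$ and its $i$-th origin becomes $\{[g]\mid (f,i)\in\mathrm{Father}(E,g)\}$, and applying $\phi$ produces the transition with target $\{\mathrm{Father}(\$(E),f)\}$ and $i$-th origin $\{\mathrm{Father}(\$(E),g)\mid (f,i)\in\mathrm{Father}(E,g)\}$. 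Here I would invoke the identity $\mathrm{Father}(\$(E),g) = \mathrm{Father}(E,g)\cup(\{(\$,1)\}\mid g\in\mathrm{Root}(E))$, which follows from Equation~\eqref{eq father tree} applied to the trees $\$(s)$ with $s\in L(E)$; since $f\in\Sigma_k$ and the symbol $\$$ is not in $\Sigma$, this gives the equivalence $(f,i)\in\mathrm{Father}(E,g)\Leftrightarrow(f,i)\in\mathrm{Father}(\$(E),g)$, so that the $i$-th origin set above is exactly $\{\mathrm{Father}(\$(E),g)\mid (f,i)\in\mathrm{Father}(\$(E),g)\}$, which is precisely the set prescribed by Definition~\ref{def fat comp}. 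Conversely, every compressed transition of $\mathcal{CF}(E)$ arises this way as the $\phi$-image of a quotient transition. Combined with the identification of the state sets and of the final states, this shows that $\phi$ and $\phi'$ are inverse automaton morphisms between $\mathcal{C}(E)_\sim$ and $\mathcal{CF}(E)$; hence $\mathcal{CF}(E)$ is (isomorphic to) the quotient of the compressed Position automaton by the Father congruence, which is the claim.
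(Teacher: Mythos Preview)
Your proposal is correct and follows essentially the route the paper intends: the paper's own ``proof'' is a one-line pointer (``Similarly to Lemma~\ref{lem comp rest trans state} and Proposition~\ref{prop mem chemin compress glu}, it can be shown that\ldots''), and your argument is precisely the natural filling-in of that hint, transporting the isomorphism of Proposition~\ref{prop father quot pos} to the compressed setting after noting that the Father congruence remains a Bottom-Up congruence because the restricted transition functions coincide. The one point on which you go slightly beyond the paper's sketch is the explicit handling of the $\$$ symbol via the identity $\mathrm{Father}(\$(E),g) = \mathrm{Father}(E,g)\cup(\{(\$,1)\}\mid g\in\mathrm{Root}(E))$; this is a genuine detail the paper glosses over, and your treatment of it is correct.
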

  As a direct corollary,
  \begin{corollary}
    The compressed Father automaton and the Father automaton of a linear expression $E$ recognize $L(E)$.
  \end{corollary}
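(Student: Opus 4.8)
The plan is to mirror, step for step, the argument already used for the uncompressed Father automaton, transporting it into the compressed setting; nothing genuinely new should be required. First I would recall that the compressed Bottom-Up Position automaton $\mathcal{C}(E)$ already recognizes $L(E)$: this is Proposition~\ref{prop mem chemin compress glu} together with the observation that $\mathcal{P}_E$ and $\mathcal{C}(E)$ share both their state set and their set of final states, so that the acceptance condition $\Delta(t)\cap Q_F\neq\emptyset$ holds for one if and only if it holds for the other.

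Next I would check that Proposition~\ref{prop bot up cong pres lang} survives the passage to compressed automata, i.e. that quotienting a compressed automaton by a Bottom-Up congruence preserves its language. The key point, already noted just before Definition~\ref{def fat comp}, is that by Equation~\eqref{eq:extdeltaEnsComp} the restriction of a compressed transition function to tuples of states is computed exactly as for an ordinary automaton, so the defining condition~\eqref{eq def bot up cong} of a Bottom-Up congruence is literally the same statement in both settings. Hence the quotient $A_\sim$ of a compressed automaton is again realized as the image of $A$ under the canonical compressed-automaton morphism $q\mapsto[q]$, which fixes every alphabet symbol; Equation~\eqref{eq:lien morph lang compress} (the compressed analogue of Proposition~\ref{prop morph pres lang}) then yields $L(A_\sim)=L(A)$.

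Then I would invoke the preceding proposition, which identifies $\mathcal{CF}(E)$ with the quotient of $\mathcal{C}(E)$ by the Father congruence, together with Proposition~\ref{prop father cong est cong} stating that the Father congruence is a Bottom-Up congruence; the proof of the latter uses only the shape of the transitions, which is the same for $\mathcal{C}(E)$ via the compressed analogue of Lemma~\ref{lem comp rest trans state}. Chaining the last two steps gives $L(\mathcal{CF}(E)) = L(\mathcal{C}(E)) = L(E)$. The claim for the Father automaton itself is the corollary already established in Section~\ref{sec:father}, and delinearizing via $\mathrm{h}$ extends both statements to arbitrary, not necessarily linear, expressions, using Equation~\eqref{eq:lien morph lang compress} once more.

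The main obstacle I anticipate is not any single computation but the bookkeeping needed to make the ``the definitions are literally the same'' claims precise: one must verify that the canonical map $q\mapsto[q]$ really is a morphism in the compressed sense (Equation~\eqref{eq def morphism} read with sets of origin states in place of single states), and that the compressed restriction-to-states operation~\eqref{eq:extdeltaEnsComp} commutes with this map, so that no spurious compressed transition is created or lost in the quotient. Once this routine verification is done, the corollary follows by merely concatenating results already proved.
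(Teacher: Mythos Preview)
Your proposal is correct and follows exactly the route the paper intends: the paper states the corollary as an immediate consequence of the preceding proposition (that $\mathcal{CF}(E)$ is the quotient of $\mathcal{C}(E)$ by the Father congruence), relying implicitly on the compressed analogue of Proposition~\ref{prop bot up cong pres lang} and on the fact that $\mathcal{C}(E)$ recognizes $L(E)$. You have simply spelled out the bookkeeping the paper leaves to the reader; note only that the delinearization step you mention at the end belongs to the \emph{next} proposition rather than to this corollary, which is stated for linear $E$ only.
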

  Applying the delinearization morphism $\mathrm{h}$ from $\mathcal{CF}_{\overline{E}}$ produce the compressed Father automaton of any expression $E$.
  Finally, according to Equation~\eqref{eq:lien morph lang compress},
  \begin{proposition}
    The compressed Father automaton and the Father automaton of an expression $E$ recognize $L(E)$.
  \end{proposition}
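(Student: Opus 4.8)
The plan is to reduce the claim to the linear case, which is settled by the preceding corollary, and then to transport it through the delinearization morphism $\mathrm{h}$, exactly as was done earlier for the (compressed) Bottom-Up Position automaton.

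First, recall that---by the same convention already adopted for the Position automaton---the Father automaton $\mathcal{F}_E$ and the compressed Father automaton $\mathcal{CF}(E)$ of an arbitrary expression $E$ are, by definition, the images $\mathrm{h}(\mathcal{F}_{\overline{E}})$ and $\mathrm{h}(\mathcal{CF}(\overline{E}))$ of the automata built from the linearized form $\overline{E}$. Hence nothing structural remains to be checked; only the recognized languages are at stake.

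Second, the preceding corollary gives $L(\mathcal{F}_{\overline{E}}) = L(\mathcal{CF}(\overline{E})) = L(\overline{E})$. Applying Equation~\eqref{eq:lienMorpAutLang} in the uncompressed case and Equation~\eqref{eq:lien morph lang compress} in the compressed case, together with these equalities, yields
\begin{gather*}
  L(\mathcal{F}_E) = L(\mathrm{h}(\mathcal{F}_{\overline{E}})) = \mathrm{h}(L(\mathcal{F}_{\overline{E}})) = \mathrm{h}(L(\overline{E})),\\
  L(\mathcal{CF}(E)) = L(\mathrm{h}(\mathcal{CF}(\overline{E}))) = \mathrm{h}(L(\mathcal{CF}(\overline{E}))) = \mathrm{h}(L(\overline{E})).
\end{gather*}

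It therefore remains to observe that $\mathrm{h}(L(\overline{E})) = L(E)$, the standard linearization identity: $\overline{E}$ is obtained from $E$ by indexing each occurrence of a symbol of rank at least $1$ by its position, whereas every rank-$0$ symbol---in particular every substitution symbol $c$---is left untouched, so that $\mathrm{h}$ is the identity on it. A routine induction over the structure of $E$, using that each of the constructions $f(\cdots)$, $+$, $\cdot_c$ and $*_c$ commutes with the alphabetical morphism $\mathrm{h}$ at the level of denoted languages, then gives $L(E) = \mathrm{h}(L(\overline{E}))$. Combining this with the two displayed chains of equalities gives $L(\mathcal{F}_E) = L(\mathcal{CF}(E)) = L(E)$, as claimed. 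The only place where any real work is done is this last commutation, and even there the sole subtlety is the bookkeeping that guarantees substitution symbols are never indexed---which is precisely ensured by the definition of $\overline{E}$.
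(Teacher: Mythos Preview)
Your proof is correct and follows the same route as the paper: define the non-linear automata as $\mathrm{h}$-images of their linearized counterparts, invoke the previously established linear case, and transport the language equality through Equations~\eqref{eq:lienMorpAutLang} and~\eqref{eq:lien morph lang compress}. The paper's own argument is actually terser---it simply points to the delinearization morphism and Equation~\eqref{eq:lien morph lang compress} without spelling out the identity $\mathrm{h}(L(\overline{E})) = L(E)$---so your additional paragraph on why that identity holds is extra detail rather than a deviation.
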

  \begin{example}
    Let us consider the expression $\overline{E}$ defined in Example~\ref{ex:calcul fonctions glushkov}.
    The compressed Father automaton of $\overline{E}$ is represented at Figure~\ref{linear_compressed_automata_father}.
  \end{example}
  \begin{figure}[H]
  \centerline{
  \begin{tikzpicture}[node distance=2.5cm,bend angle=30,transform shape,scale=1]
    \node[accepting ,state, rounded rectangle] (f1)  {$\{f_1,g_2\}$} ;
    % \node[state,double,  above right of=f1, node distance =2cm] (g_2) {$g_2$};
     \node[state, below right of=g_2, node distance = 3cm, rounded rectangle] (f_3) {$[f_3]$};
    \node[state, below of= f_3,node distance=1.7cm, rounded rectangle] (b) {$[b]$};
    \node[state,  below left of=f_3, rounded rectangle] (g_4) {$[g_4]$};
    \node[state,  below of=f1,node distance = 1cm] (cerc) {};
    \node[state,  left of=f_3,node distance = 1.3cm] (cerc1) {};
    \node[state, double,  below left of=f1,node distance = 2.5cm, rounded rectangle] (a) {$[a]$};
    %transitions 0
    \draw (a) ++(-1cm,0cm) node {$a$}  edge[->] (a);
    \draw (b) ++(0cm,-1cm) node {$b$}  edge[->] (b);
    %transitions 1
    \path[->]
     (f_3) edge[->,  bend right,above right] node {$g_2$} (f1)
     (a) edge[->,below left,bend right] node {$g_4$} (g_4)
     (cerc) edge[->, left] node {$f_1$} (f1)
     (cerc1) edge[->, above] node {$f_3$} (f_3);
    %transition 2
    \draw (cerc) ++(0.5 cm,-0.5cm)edge node [above,pos=0.5] {} (cerc)  edge[dashed,bend left=10] node[left,above,pos=0.5]{}(a)edge[dashed,bend right=30] node[left,above,pos=0.5]{}(f1);
    \draw (cerc) ++(-0.5 cm,-0.5cm)edge node [above,pos=0.5] {} (cerc)  edge[dashed] node[left,above,pos=0.5]{}(a)
    edge[dashed,bend left=30] node[left,above,pos=0.5]{}(f1);
    \draw (cerc1) ++(-0.5 cm,-0.5cm)edge node [above,pos=0.5] {} (cerc1)  edge[dashed,bend right]node[left,above,pos=0.6]{} (g_4);
    \draw (cerc1) ++(0.5 cm,-0.5cm)edge node [above,pos=0.5] {} (cerc1)  edge[dashed,bend left=15]node[left,above,pos=0.6]{} (b);
    \end{tikzpicture}}
    \caption{The compressed Father automaton of the expression $(f_1(a,a)+g_2(b))^{*_a}\cdot_b f_3(g_4(a),b)$.}
    \label{linear_compressed_automata_father}
  \end{figure}
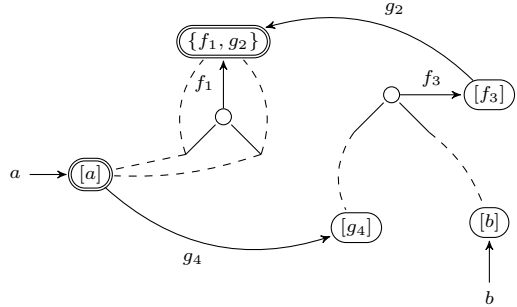

\section{Web Application}
  The computation of the position functions and the Glushkov and Father constructions have been implemented in a web application (made in Haskell, compiled into Java\-script using the \href{https://github.com/reflex-frp/reflex-platform}{\textsc{reflex platform}}, represented with \href{https://github.com/mdaines/viz.js}{\textsc{viz.js}}) in order to help the reader to manipulate the notions.
  From a regular expression, it computes the classical Top-Down Glushkov defined in~\cite{LOZ13}, and both the normal and the compressed versions of the Glushkov and Father Bottom-Up automata.

  This web application can be found \href{http://ludovicmignot.free.fr/programmes/fatherBotUp/index.html}{here}~\cite{AppWeb}.
  As an example, the expression $(f(a,a)+g(b))^{*_a}\cdot_bf(g(a),b)$ of Example~\ref{ex:calcul fonctions glushkov} can be defined from the literal input \texttt{(f(a,a)+g(b))*a.bf(g(a),b)}.

\section{Conclusion and Perspectives}

In this paper, we have shown how to compute the Bottom-Up Position and Father automata associated with a regular expression.
This construction is relatively similar to the classical ones defined over a word expression~\cite{Glu61,IY03}.
We have also proposed two reduced versions, the compressed Bottom-Up Position and Father automata, that can be easily defined for word expressions too.

Since this construction is related to the classical one, one can wonder if all the studies involving Glushkov word automata can be extended to tree ones (\cite{BDHLM04,BKW98,CZ00,Nic09}).
The classical Glushkov construction was also studied \emph{via} its morphic links with other well-known constructions.
The next step of our study is to extend Antimirov partial derivatives~\cite{Ant96} in a Bottom-Up way too (in a different way from~\cite{KM11}), using the Bottom-Up quotient defined in~\cite{CMOZ17}.

Moreover, we can also consider the transition compression in a Top-Down way, that is to also aggregate the destination states. This way, the Follow tree automaton~\cite{MOZ17} can also be compressed too.

Finally, this compression method can lead to heuristic study in order to choose a good aggregation to optimize any reduction.
To this aim, the decomposition techniques implemented to compute the Common Follow Set automaton~\cite{HSW01} can be considered.

\bibliography{biblio}
\end{document}